\documentclass[10pt,letterpaper,twocolumn]{article}

\usepackage[
  letterpaper,
  left=0.68in,
  right=0.68in,
  top=0.75in,
  bottom=1in,
  columnsep=0.24in
]{geometry}
\usepackage{lmodern}
\usepackage[T1]{fontenc}
\usepackage{microtype}
\usepackage{xspace}
\usepackage{booktabs}
\usepackage{array}
\usepackage{multirow}
\usepackage{graphicx}
\usepackage{hyperref}
\usepackage{xcolor}
\usepackage[font=footnotesize,labelfont=bf]{caption}
\usepackage[compact]{titlesec}
\usepackage{placeins}
\usepackage{amsmath}
\usepackage{amssymb}
\usepackage{amsthm}
\newtheorem{theorem}{Theorem}

\hypersetup{colorlinks=true, linkcolor=black, citecolor=black, urlcolor=blue}
\renewcommand{\arraystretch}{0.96}
\titlespacing*{\section}{0pt}{1.1ex plus .3ex minus .2ex}{0.55ex plus .15ex}
\titlespacing*{\subsection}{0pt}{0.85ex plus .25ex minus .15ex}{0.35ex plus .1ex}
\titlespacing*{\paragraph}{0pt}{0.45ex plus .15ex minus .1ex}{0.55em}

\newcommand{\method}{\textsc{MemLineage}\xspace}
\newcommand{\attackSleeper}{\textit{sleeper-via-derivation}\xspace}
\newcommand{\attackPoison}{\textit{AgentPoison-style}\xspace}
\newcommand{\attackGraft}{\textit{MemoryGraft-style}\xspace}

\newcommand{\trustTrusted}{\textsc{Trusted}\xspace}
\newcommand{\trustDerivedTrusted}{\textsc{Derived-Trusted}\xspace}
\newcommand{\trustDerivedUntrusted}{\textsc{Derived-Untrusted}\xspace}
\newcommand{\trustExternal}{\textsc{External}\xspace}

\newcommand{\algoCoarse}{\textsc{Coarse}\xspace}
\newcommand{\algoLmEval}{\textsc{LmSelfEval}\xspace}
\newcommand{\algoAttn}{\textsc{AttnAttr}\xspace}

\newcommand{\up}{$\uparrow$}
\newcommand{\down}{$\downarrow$}

\title{\method{}: Lineage-Guided Enforcement for LLM Agent Memory}
\author{%
Ciyan Ouyang\\
State Key Laboratory of Cyberspace Security Defense\\
Institute of Information Engineering, CAS\\
Beijing, China
\and
Rui Hou\thanks{Corresponding author.}\\
State Key Laboratory of Cyberspace Security Defense\\
Institute of Information Engineering, CAS\\
Beijing, China
}
\date{}

\begin{document}
\maketitle

\begin{abstract}
We introduce \method{}, a defence for LLM agent memory that
attaches both cryptographic provenance and LLM-mediated derivation
lineage to every entry. Recent and concurrent work shows that
untrusted content can be written into persistent agent state and
re-enter later sessions as an instruction; the remaining systems
question is how to preserve useful memory recall while preventing
such state from justifying sensitive actions. \method{} treats this
as a chain-of-custody problem rather than a filtering problem. It is
a six-module design around an RFC-6962 Merkle log over per-principal
Ed25519-signed entries: a weighted derivation DAG records which
retrieved entries influenced each new memory, and a
max-of-strong-edges propagation rule makes Untrusted-Path
Persistence hold for any chain whose attribution edges remain above
threshold. The sensitive-action gate then refuses dispatches whose
active justification descends from an external ancestor, while still
    allowing benign recall. We evaluate three defence cells against
    three memory-poisoning workloads on a deterministic
    mechanism-isolation harness; \method{} is the only configuration in
    that harness that drives all three columns to zero ASR, while
    sub-millisecond per-operation overhead keeps it well below the
    noise floor of any LLM call. A Codex-backed AgentDojo bridge
    further separates strong-model behaviour from defence-layer
    behaviour: under an intentionally vulnerable tool-output profile,
    no-defence and signature-only baselines fail on all six banking
    pairs, while all \method{} rows reduce strict AgentDojo ASR to
    zero. The core deterministic artifacts (ASR matrix,
    $\tau \times K$ lineage ablation, utility baseline, and recovery
    POCs) are byte-equal CI-verified; hosted-model AgentDojo and
    live-model sweeps are recorded as auditable logs rather than
    byte-pinned artifacts.
\end{abstract}

\section{Introduction}
\label{sec:intro}

Modern LLM agents persist memory across user sessions: chat logs,
ingested documents, tool returns, and the agent's own derived
observations all accumulate in a long-running store the next user
turn will retrieve from. This memory is the agent's \emph{capability}
substrate -- it is what allows multi-step workflows to compose -- and
it is increasingly the agent's attack surface. Indirect prompt
injection through the active context~\cite{greshake2023indirect} has
been studied at length; the published memory-poisoning literature
adds three named families:
AgentPoison~\cite{chen2024agentpoison} plants trigger-bound
backdoors that surface on a matching query;
MINJA~\cite{dong2025minja} injects payloads through normal
query-time interactions; and
MemoryGraft~\cite{srivastava2025memorygraft} commits poisoned
experience entries that persist across sessions. Zombie
Agents~\cite{yang2026zombieagents} and a concurrent defence
evaluation~\cite{leong2026defenseeffectiveness} sharpen this into a
two-phase persistent-memory threat: attacker-controlled content is
stored through the agent's normal memory update path and later
re-enters a different session as instruction. A 2025 SoK
catalogues twenty-three IPI-centric defences against
these~\cite{ji2025ipisok}. The attack class is therefore no longer
speculative. The open systems problem is how to preserve useful
long-term memory while preventing untrusted memory ancestry from
authorising sensitive actions.

\paragraph{The enforcement gap.}
The hard case for a memory defence is not only a malicious string
stored verbatim. A patient adversary can plant \emph{ingredients} in
untrusted sources, then wait for the agent's normal retrieval and
summarisation loop to emit a fresh derived entry that the agent
commits under its own writer principal. The derived entry is, by
construction, an authentic output of a registered principal,
indistinguishable to a signature layer from a benign summary of the
same context. Many sessions later, an unrelated query retrieves the
derived entry and the laundered payload triggers a sensitive action.
This chain-of-custody failure exposes three capability gaps:

\begin{itemize}
  \item Signature-only memory layers, in the spirit of in-toto-style
    supply-chain integrity~\cite{torres-arias2019intoto}, verify
    writer identity but do not recover \emph{what} the entry was
    derived from.
  \item IFC-based agent planners such as
    Fides~\cite{costa2025fides} enforce confidentiality and integrity
    labels at planning time but do not persist labels across the
    LLM derivation step that commits the laundered entry.
  \item Retrieval-stage filters such as the
    RAGPart~/~RAGMask family~\cite{pathmanathan2025ragdef} operate
    on the recall surface but cannot tell a benign summary from a
    laundered payload arriving through an authentic principal.
  \item Coarse memory-layer defences that remove recall or attenuate
    all capabilities after exposed reads~\cite{leong2026defenseeffectiveness,zha2026agentworms}
    can be effective, but give up the fine-grained distinction
    between useful derived memories and memories whose sensitive
    use depends on untrusted ancestry.
\end{itemize}

\paragraph{Approach.}
We introduce \method{}, which attaches \emph{cryptographic provenance
and lineage} to every memory entry. Provenance is per-entry Ed25519
signature anchored in an RFC~6962 Merkle log~\cite{laurie2013rfc6962};
lineage is a directed acyclic graph over the memory store whose
edges record LLM-mediated derivation, weighted by an attribution
algorithm. A \emph{max-of-strong-edges} propagation rule (\S\ref{sec:design-lineage})
makes Untrusted-Path Persistence hold: any path from an
\trustExternal{} ancestor whose every edge is strong forces the
chain tip to inherit the untrusted label, which the
sensitive-action gate refuses. The capability the attribution
algorithm has to deliver, formally, is the recall $r_K$ of the
strong-edge predicate at chain length $K$; we ablate three
algorithms (uniform-weight, LM-self-eval, white-box attention) and
sweep $\tau \times K$ to map the trade-off.

\paragraph{Evaluation summary.}
We instantiate \method{} as a Python library that integrates with
LangGraph and evaluate it on a deterministic harness that pins
attacker behaviour and avoids real LLM API calls. Three defence
cells are compared on three attack families (Table~\ref{tab:asr}):
\texttt{no\_defense} fails all three; a \texttt{sig\_only\_baseline}
(per-entry Ed25519 plus Merkle inclusion, no lineage) closes the
\attackPoison{} column but fails on \attackGraft{} and
\attackSleeper{}; \method{} drives all three columns to zero on this
mechanism-isolation harness. The
$\tau \times K$ ablation under a degrading judge schedule
(Table~\ref{tab:rk-lmeval}) exposes the boundary that a
single-shot $\tau$ measurement would miss: under
$w_0 = 0.9$, $d = 0.7$, the safe $\tau$ ceiling shrinks with $K$
(the deepest edge weight $w_K \!=\! 0.9 \cdot 0.7^{K-1}$ sets the
binding constraint), so a deployment allowing $K \!=\! 5$
derivation hops needs $\tau \le 0.10$ rather than the $\tau \!\le\! 0.30$
a $K\!=\!1$ measurement would suggest.
Per-operation overhead (Table~\ref{tab:perf}) is sub-millisecond on
the hot path -- well below the noise floor of any LLM call. We also
ship a Codex-backed AgentDojo bridge: the default DirectAttack sweep
shows that modern instruction-tuned models can make conservative
no-defence ASR look low (1/6), while an explicitly vulnerable
tool-output profile drives both no-defence baselines to 6/6 and all
\method{} rows to 0/6 strict AgentDojo ASR.

\paragraph{Contributions.}
\begin{itemize}
  \item \textbf{A lineage-stress memory-laundering workload}
    (\S\ref{sec:atk-sleeper},
    Figure~\ref{fig:sleeper-chain}):
    a clean three-stage workload that operationalises the hard case
    surfaced by persistent-memory attacks: untrusted content is
    transformed through LLM-mediated derivation into an authentic
    agent-written entry. We use it as a separator between
    signature-only, coarse-taint, and lineage-aware memory defences,
    not as a claim that cross-session persistence itself is novel.
  \item \textbf{The \method{} design} (\S\ref{sec:design},
    Figure~\ref{fig:architecture}):
    six modules around a single memory store, with the
    max-of-strong-edges propagation rule and an
    Untrusted-Path Persistence theorem (Theorem~\ref{thm:soundness})
    formalising what the lineage layer actually buys. M6 further
    checks tool arguments against per-tool authority rules, which
    supports utility-preserving repair when trusted provenance supplies
    authorized replacement values for attacker-controlled parameters.
    The four
    capability dimensions that define the system's operating point
    are summarised in Figure~\ref{fig:capability-matrix}
    (\S\ref{sec:related-position}).
  \item \textbf{Three attribution algorithms} (\S\ref{sec:design-lineage}):
    uniform-weight, secondary-LLM judge with prompt-injection
    hardening, and white-box attention. We quantify $K$-step
    attribution recall $r_K$ and its interaction with the lineage
    threshold $\tau$ (\S\ref{sec:eval-rk}).
  \item \textbf{An end-to-end evaluation harness} that produces
    reproducible artifacts under \texttt{paper/data/}. The ASR matrix,
    the two-session RAG-to-memory workflow, the $\tau \times K$
    ablation, adaptive mitigation cells, recovery POCs, and the
    utility baseline are byte-equal CI-verified
    (\texttt{asr\_matrix\_v1.csv},
    \texttt{rag\_memory\_e2e\_v1.csv},
    \texttt{r\_k\_ablation\_v1.csv},
    \texttt{adaptive\_mitigations\_v1.csv},
    \texttt{coarse\_authority\_contrast\_v1.csv},
    \texttt{coarse\_authority\_matrix\_v1.csv},
    \texttt{authority\_source\_manifest\_v1.csv},
    \texttt{authority\_repair\_poc\_v1.csv},
    \texttt{rag\_memory\_llm\_summary\_v1.csv},
    \texttt{utility\_baseline\_v1.csv}). Hosted-model AgentDojo and
    live-model sweep artifacts are preserved as auditable logs rather
    than byte-pinned tests, and the
    per-operation micro-bench (\texttt{perf\_baseline\_v1.csv}) is
    regenerated because its numbers are wall-clock-dependent.
\end{itemize}

\paragraph{Roadmap.}
\S\ref{sec:threat} fixes the threat model and trust assumptions.
\S\ref{sec:design} presents the six-module design and the
propagation rule. \S\ref{sec:impl} discusses the implementation
boundary, including the JSON-envelope hardening that prevents the
LLM judge from being prompt-injected.
\S\ref{sec:attacks} specifies the three attack families.
\S\ref{sec:evaluation} reports the empirical results.
\S\ref{sec:related} positions \method{} against published
defences and concurrent preprints (Fides, RAGPart/RAGMask,
NeuroTaint, A-MemGuard, Memory Sandbox, RTW-A, and the IPI-defence
landscape catalogued in~\cite{ji2025ipisok}).
\S\ref{sec:discussion} treats limitations -- the LLM-inference
trust assumption, the white-box attention requirement of the
strongest attribution algorithm, and the deferred HMAC fast-path.

\section{Threat Model}
\label{sec:threat}

This section fixes the system, the attacker, and the trust boundaries
that the rest of the paper operates under. The five-front survey of
named defences and named attacks is deferred to \S\ref{sec:related};
here we describe only the model that those defences and attacks share.

\subsection{System Model}
\label{sec:threat-system}

An \emph{LLM agent} is a host process that drives a chat-completion
model in a loop, calling tools and consulting memory between turns.
The agent has three durable surfaces relevant to provenance:

\begin{itemize}
  \item A \emph{memory store} $M$: a content-addressed set of entries
    written by either the host (on user request) or the agent itself
    (when the agent decides a derived observation is worth keeping).
    Each entry carries a payload, a creation timestamp, and any
    metadata the host or agent attaches.
  \item A \emph{retrieval surface} $R(q) \subseteq M$: an
    embedding-based or BM25-based recall step that selects a
    bounded subset of $M$ to splice into the next prompt. We do
    not require any specific retrieval algorithm; the threat model
    only assumes that the surface returns entries authentic to $M$
    (see \S\ref{sec:threat-trust}).
  \item A \emph{tool surface} $T$: a fixed set of functions the agent
    may invoke between turns. A subset $T_{\text{sens}} \subseteq T$
    is policy-sensitive (file-system writes, financial transfers,
    outbound HTTP, code execution, etc.); the security goal is to
    prevent unauthorised dispatch into $T_{\text{sens}}$.
\end{itemize}

The host process holds long-term key material on behalf of one or
more \emph{principals} (the user, the agent itself, and any upstream
agent or tool that is permitted to write into $M$). Principal keys
are used to sign entries written into $M$; key custody is part of
the trusted base (\S\ref{sec:threat-trust}).

\subsection{Attacker Model}
\label{sec:threat-attacker}

We assume a \emph{white-box} adversary with full knowledge of
\method{}'s design, metadata schema, lineage propagation rule, and
sensitive-action policy. The adversary is omniscient about the
defence; they are constrained only by the capability boundaries
listed below.

\paragraph{Adversary capabilities.}
The adversary controls at least one untrusted data source reachable
by the agent and may exercise any of:

\begin{enumerate}
  \item \textbf{Untrusted-source authorship.} The adversary controls
    one or more of: web pages the agent fetches, documents the agent
    ingests, output channels of an external tool the agent invokes, or
    messages from an upstream agent that is permitted to write into
    $M$. This is the standard indirect-prompt-injection
    threat~\cite{greshake2023indirect}.
  \item \textbf{Memory write through legitimate paths.} The adversary
    can inject entries into $M$ \emph{only} along paths the agent or
    host already permits: by being read from a controlled web page,
    by being returned from a controlled tool, by being delivered as a
    message from an upstream agent. The adversary cannot bypass the
    write surface to insert raw entries with arbitrary metadata.
  \item \textbf{Cross-session persistence.} Payloads injected in
    session~$s_i$ may remain in $M$ and re-emerge through retrieval
    in any later session~$s_j$, $j > i$. We refer to this as a
    \emph{sleeper} pattern, modelled directly on Zombie
    Agents~\cite{yang2026zombieagents},
    AgentPoison-style trigger-bound backdoors~\cite{chen2024agentpoison}
    and MemoryGraft-style retrieval poisoning~\cite{srivastava2025memorygraft}.
  \item \textbf{LLM-mediated laundering.} The adversary's payload may
    enter $M$ \emph{indirectly}: an untrusted source gets retrieved
    into context, the LLM emits a derived summary or rephrasing, and
    the agent commits the derived content as a fresh entry. This
    laundering step is the lineage-stress case the present paper uses
    (\S\ref{sec:attacks}).
  \item \textbf{Adaptive querying} (deferred). The adversary may, in
    principle, issue probe queries against the deployed system to
    estimate its lineage threshold and policy verdicts and tune
    subsequent payloads. The deterministic harness in
    \S\ref{sec:evaluation} fixes a non-adaptive payload sequence per
    cell so that the ASR numbers are reproducible without API
    calls; we discuss adaptive evaluation as future work in
    \S\ref{sec:discussion}.
\end{enumerate}

\paragraph{Adversary non-capabilities.}
The following are explicitly outside the adversary's reach in this
paper:

\begin{enumerate}
  \item \textbf{Host process integrity.} The adversary cannot modify
    the host binary, the \method{} library code, or any in-process
    state outside $M$.
  \item \textbf{Principal-key secrecy.} The adversary cannot read
    private keys held by registered principals (the user, the agent,
    upstream principals). Equivalently, the adversary cannot forge a
    signature under a principal whose key they do not hold.
  \item \textbf{LLM-inference manipulation.} The adversary cannot
    rewrite the LLM weights or inject an external compute path that
    bypasses the agent's normal inference pipeline. Their only
    influence on inference is through the contents of $M$ and any
    untrusted source that is retrieved into context.
  \item \textbf{Attestation forgery.} If the deployment opts in to
    TEE-bound key custody (an extension we discuss in
    \S\ref{sec:discussion}, not part of the main evaluation), the
    adversary cannot forge a hardware attestation report.
\end{enumerate}

\subsection{Trust Assumptions}
\label{sec:threat-trust}

We use a three-tier source classification consistent with
\cite{chen2024agentpoison,srivastava2025memorygraft,greshake2023indirect}
and aligned with the broader IPI-defence taxonomy
in~\cite{ji2025ipisok}:

\begin{itemize}
  \item \trustTrusted{}: user prompts; entries written by the host on
    explicit user authorisation; principal-key infrastructure
    (private keys, public-key registry, Merkle log root anchor).
  \item \emph{Semi-trusted}: tool return values from sandboxed or
    audited tools, and entries derived from them through a documented
    policy. Semi-trusted content can be cryptographically authenticated
    as having come from the named tool, but its \emph{contents} can
    still be attacker-influenced if the tool consumes adversary-controlled
    input.
  \item \trustExternal{}: arbitrary fetched web content, third-party
    document corpora, messages from upstream agents whose host trust
    we do not control, and any payload reached transitively through
    those.
\end{itemize}

These classes induce two derived labels that \method{} reasons about
explicitly: \trustDerivedTrusted{} (an entry derived by the LLM from
parents that all evaluate above the lineage threshold $\tau$) and
\trustDerivedUntrusted{} (the dual case). The propagation rule that
defines them is given in \S\ref{sec:design}; the empirical sweep over
$\tau$ and chain length $K$ is in \S\ref{sec:eval-rk}.

\subsection{Attack Pattern Taxonomy}
\label{sec:threat-patterns}

The threat surface partitions into four families. \method{} addresses
the last three; the first is included only to fix the boundary.

\begin{enumerate}
  \item \emph{Single-turn direct injection}~\cite{greshake2023indirect}:
    a malicious string in the current turn's untrusted context steers
    the model output before any memory write occurs. This family is
    well covered by detection-, sanitisation-, and IFC-based
    defences~\cite{ji2025ipisok}; we treat it as orthogonal and out
    of scope.
  \item \emph{Memory-backdoor poisoning}~\cite{chen2024agentpoison}: the
    adversary plants trigger-bound entries that, once retrieved by a
    matching query, steer the agent toward an attacker-chosen action.
    The hard sub-case is the \emph{sleeper}: trigger-bound entries
    that lie dormant across many sessions before activation.
  \item \emph{Query-injection / retrieval poisoning}~\cite{dong2025minja,srivastava2025memorygraft}:
    the adversary plants entries chosen so that benign user queries
    surface them through the standard retrieval ranking, and the
    surfaced content carries injection. MemoryGraft additionally
    models persistent compromise via poisoned experience entries.
  \item \emph{Sleeper-via-derivation} (\S\ref{sec:attacks}):
    the adversary's payload never enters $M$ directly. The adversary
    plants only \emph{ingredients} in untrusted sources; the LLM is
    coerced, through normal retrieval and summarisation, to
    \emph{emit} the payload as a fresh derived entry that the agent
    commits. The derived entry is, by construction, an authentic
    output of the agent principal -- which is precisely why
    signature-only defences cannot rule it out. This family is the
    benchmark stressor the paper uses to test whether a defence
    preserves chain of custody through LLM-mediated derivation.
\end{enumerate}

\subsection{Security Goals}
\label{sec:threat-goals}

\method{}'s evaluation in \S\ref{sec:evaluation} is judged against
four informal goals; \S\ref{sec:design} formalises the propagation
rule that satisfies them.

\begin{description}
  \item[G1: Source-bound integrity.] No entry that is recoverable from
    $M$ can have been authored by a principal whose key is not in the
    public-key registry. Equivalently, the adversary cannot inject
    raw entries that pass verification.
  \item[G2: Lineage-bound trust.] An entry whose lineage chain
    contains an \trustExternal{} ancestor cannot present as
    \trustDerivedTrusted{} unless every step from the ancestor passes
    the lineage threshold $\tau$.
  \item[G3: Cross-session persistence of provenance.] Provenance
    metadata is content-addressed and signed; it is invariant under
    serialisation, transport, and cross-session retrieval. A
    derived-untrusted chain remains derived-untrusted across
    arbitrarily many session boundaries.
  \item[G4: Sensitive-action gating.] No call into $T_{\text{sens}}$
    is dispatched whose justifying memory contains an entry that
    fails G2. This is the operational invariant the policy gate
    enforces (\S\ref{sec:design}).
\end{description}

The signature-only baseline (\S\ref{sec:eval-asr}) closes G1 alone.
G2 and G4 require the lineage layer; G3 requires both. The empirical
claim of the paper is that the four goals are jointly enforceable at
sub-millisecond per-operation cost (\S\ref{sec:eval-perf}) with no
loss of utility on the deterministic harness (\S\ref{sec:eval-asr}).

\subsection{Out of Scope}
\label{sec:threat-oos}

The following adversaries and surfaces are explicitly outside the
present paper:

\begin{itemize}
  \item Single-turn direct prompt injection with no memory write
    (covered by~\cite{greshake2023indirect,ji2025ipisok}).
  \item Model-weight-level backdoors and supply-chain compromise of
    the inference engine; we assume the LLM weights are trusted.
  \item Host-process compromise (kernel exploits, container escapes,
    debugger attach). The host trusted base is responsible for
    keeping principal keys and the Merkle log anchor confidential.
  \item Manipulation of retrieval inference itself beyond the
    contents of $M$. Active learning attacks against the retriever's
    ranking model are a separate threat class.
  \item Defences against the LLM revealing private memory contents to
    the adversary (orthogonal confidentiality goals; e.g.,
    \cite{costa2025fides} addresses these via IFC labels and is
    largely complementary to provenance).
\end{itemize}

The combination of (i) cryptographic label integrity,
(ii) cross-session persistence, (iii) coverage of agent-derived
entries, and (iv) defence against LLM-mediated laundering is what
\method{} contributes; \S\ref{sec:related} positions this
entry-level memory primitive against coarse recall removal,
temporal re-entry control, semantic taint, and experience-driven
memory defences.

\section{Design}
\label{sec:design}

\method{} attaches cryptographic provenance and per-derivation lineage
to every memory entry, and refuses sensitive actions whose justifying
context contains an entry whose chain of custody crosses an
\trustExternal{} ancestor through a too-strong edge. This section
fixes the architecture (\S\ref{sec:design-arch}), the metadata schema
(\S\ref{sec:design-meta}), the cryptographic binding
(\S\ref{sec:design-crypto}), the append-only log
(\S\ref{sec:design-log}), the lineage DAG and its propagation rule
(\S\ref{sec:design-lineage}), the verifier-aware retrieval surface
(\S\ref{sec:design-retrieval}), and the sensitive-action gate
(\S\ref{sec:design-policy}). The attack used to evaluate the design
end-to-end (\attackSleeper{}) is described separately in
\S\ref{sec:attacks}.

\subsection{Architecture}
\label{sec:design-arch}

\method{} is six modules sitting around a single memory store $M$;
Figure~\ref{fig:architecture} renders the dataflow.

\begin{figure}[t]
  \centering
  \includegraphics[width=\columnwidth]{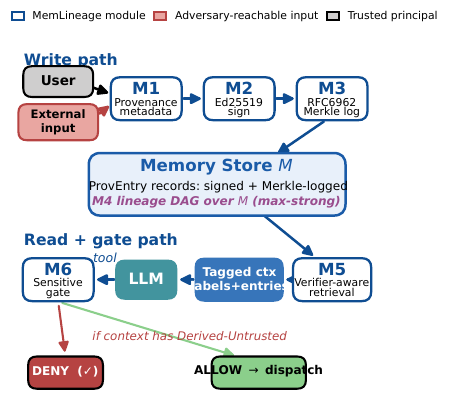}
  \caption{\method{} architecture: six modules around a single
    memory store $M$. The write path (top) commits an entry
    through M1 metadata, M2 per-principal Ed25519 signing, and M3
    RFC-6962 Merkle logging; the read + gate path (bottom)
    retrieves through M5, attaches trust labels via M4 lineage
    propagation, and refuses sensitive tool calls at M6 whose
    justifying context carries a \trustDerivedUntrusted{} ancestor.
    Adversary-reachable inputs (red) are the only path through
    which untrusted content reaches $M$; the rest of the pipeline
    is host-trusted (\S\ref{sec:threat}).}
  \label{fig:architecture}
\end{figure}

\begin{description}
  \item[M1 -- Provenance metadata.] A canonical schema attached to
    every entry that records the writer principal, the source-data
    hash, the write-time context hash, the trust level, and the
    parent-edge list. Defined in \S\ref{sec:design-meta}.
  \item[M2 -- Cryptographic binding.] Ed25519 signature over the
    canonical encoding of every entry, keyed per principal.
    \S\ref{sec:design-crypto}.
  \item[M3 -- Append-only log.] An RFC~6962-style Merkle
    log~\cite{laurie2013rfc6962} of entry hashes with explicit
    tombstone semantics. \S\ref{sec:design-log}.
  \item[M4 -- Lineage DAG.] A directed acyclic graph over $M$ whose
    edges record LLM-mediated derivation, weighted by an attribution
    algorithm. The propagation rule on this DAG is what closes the
    \attackSleeper{} column. \S\ref{sec:design-lineage}.
  \item[M5 -- Verifier-aware retrieval.] A retrieval-time hook that
    verifies signatures, looks up trust levels, and renders a
    \emph{tagged} context segment for each retrieved entry.
    \S\ref{sec:design-retrieval}.
  \item[M6 -- Sensitive-action gate.] A policy gate aligned with
    Progent~\cite{shi2025progent} that examines the trust labels in
    the active prompt before dispatching any call into
    $T_{\text{sens}}$ (\S\ref{sec:threat-system}).
    \S\ref{sec:design-policy}.
\end{description}

The write path is M1~$\to$~M2~$\to$~M3 (with M4 attaching parent
edges); the read path is M3~$\to$~M5~$\to$~prompt, with M6 gating any
tool dispatch produced by the resulting LLM turn. The single
threaded invariant the rest of the section preserves is that no entry
ever leaves M5 without a verified signature and a current trust label.

\subsection{Provenance Metadata (M1)}
\label{sec:design-meta}

Each entry is a record of the form
\begin{equation*}
  \begin{aligned}
    \texttt{ProvEntry}=\langle&
      \mathit{eid}, \mathit{content}, \mathit{writer},
      h_{\text{src}}, h_{\text{ctx}},\\
    & t, \mathit{trust}, P, W, \mathit{ts},
      \mathit{nonce}, \sigma\rangle .
  \end{aligned}
\end{equation*}

where
$\mathit{eid}$ is a UUIDv7 entry identifier;
$\mathit{content}$ is the payload bytes (optionally encrypted);
$\mathit{writer}$ is the principal that authored the entry, drawn
from \texttt{user}, \texttt{agent:$i$}, \texttt{tool:$\tau$}, or
\texttt{external:$u$};
$h_{\text{src}}$ is the SHA-256 of the original data blob the entry
descends from;
$h_{\text{ctx}}$ is the SHA-256 of the write-time prompt context;
$\mathit{trust} \in \{0,1,2,3\}$ is the trust level
($0=\trustTrusted{}$, $1=\trustDerivedTrusted{}$,
$2=\trustDerivedUntrusted{}$, $3=\trustExternal{}$);
$P$ is the list of parent entry identifiers and $W$ the matching list
of attribution weights (\S\ref{sec:design-lineage});
$\mathit{ts}$ is a nanosecond write timestamp;
$\mathit{nonce}$ is a 16-byte fresh nonce; and
$\sigma$ is the Ed25519 signature.

The record is serialised in canonical CBOR~\cite{bormann2020rfc8949}.
Canonicality matters: $\sigma$ covers every field except $\sigma$
itself, so any encoding ambiguity would let an adversary forge two
distinct entries with the same signature. Canonical CBOR fixes the
ordering of keys, the encoding of integer minor types, and the
representation of zero-length collections, which together give
deterministic bytes.

\subsection{Cryptographic Binding (M2)}
\label{sec:design-crypto}

Each registered principal holds an Ed25519 keypair~\cite{bernstein2012ed25519}.
Public keys are distributed through a host-local registry; private
keys live with the principal (the user, the agent, or the host on
the agent's behalf). On write, the principal signs the canonical
encoding of every field except $\sigma$. On retrieval, M5 looks up
the writer's public key in the registry and verifies $\sigma$;
verification failure causes the entry to be dropped before the
trust label is ever consulted.

The verifier's key registry models the trusted base. An adversary who
cannot register a public key cannot mint entries that survive the
verify step (security goal G1, \S\ref{sec:threat-goals}). An
adversary who can compromise a registered key falls outside the
threat model (\S\ref{sec:threat-attacker}, non-capability~2).

A batched HMAC fast-path is admissible for high-write-rate deployments
(every $N$ writes are signed once with an Ed25519 batch over the
HMAC roots), but the deterministic harness in
\S\ref{sec:evaluation} uses per-entry Ed25519 throughout. The
fast-path is documented as future work in \S\ref{sec:discussion}.

\subsection{Append-only Log (M3)}
\label{sec:design-log}

M3 is an RFC~6962-style Merkle log~\cite{laurie2013rfc6962}. Every
committed entry adds a leaf $\mathrm{SHA256}(\mathit{eid} \mathbin\Vert \sigma)$
to the tree; the log root is anchored at host start-up and re-rolled
into a checkpoint every $N$ leaves. Inclusion proofs allow any third
party to convince themselves that a given entry is on the canonical
log without holding the full log.

Deletion is deliberately not destructive. To remove an entry, the host
emits a \emph{tombstone marker} (a leaf prefixed with the literal
\texttt{ts:}) that names the original entry id and the deletion
reason; the original leaf is retained and remains inclusion-provable.
This rules out an adversary silently rewriting history by deleting
inconvenient entries: any deletion is itself a signed,
inclusion-provable event.

\subsection{Lineage DAG and Propagation (M4)}
\label{sec:design-lineage}

\paragraph{Formalisation.}
The memory induces a directed acyclic graph $G = (V, E)$ where
$V = M$ and $E \subseteq V \times V$ contains an edge $(p, c)$
iff $p$ appeared in the LLM context at the moment $c$ was written.
Each edge carries a weight $w(p, c) \in [0, 1]$ reported by an
\emph{attribution algorithm} $A$:
\[
  A : (c, P, X) \;\mapsto\; \{w(p, c) : p \in P\},
\]
where $P$ is the candidate-parent set and $X$ is the write-time
context. Section~\ref{sec:eval-rk} ablates three concrete instances
of $A$.

\paragraph{Propagation rule (D14).}
Trust on the DAG is propagated by a \emph{max-of-strong-edges} rule:
\begin{equation}
  \mathit{trust}(c)
  \;=\;
  \max_{p \in P_c}
    \mathbf{1}\!\left[w(p, c) > \tau\right] \cdot \mathit{trust}(p),
  \label{eq:d14}
\end{equation}
where $P_c$ is the parent set of $c$ and $\tau \in [0, 1]$ is the
\emph{declassification threshold}. Trust is encoded so that a
larger value is less safe; the $\max$ picks the riskiest still-strong
parent. The inequality is strict ($w > \tau$): at $\tau = 0$ every
parent with positive weight contributes, and at $\tau = 1$ only
parents whose attribution weight \emph{exceeds} 1 contribute. With
\algoCoarse{}'s constant edge weight of $1.0$, $\tau = 1$ is
therefore an inert configuration in which no parent contributes;
deployments should choose $\tau$ strictly below the algorithm's
expected output range. Section~\ref{sec:eval-rk} sweeps the full
$[0, 1]$ interval and reports the boundary explicitly so the
inert endpoint is visible.

\paragraph{No-strong-parent fallback.}
When $\{p \in P_c : w(p, c) > \tau\}$ is empty, $\max$ is taken
over an empty set and we return $\mathit{trust}(c) = \trustTrusted{}$.
This is the conservative choice in the \emph{absence} of positive
attribution evidence: the propagation rule does not fabricate an
untrusted ancestor when none of the candidates surpassed the
threshold. The cost is that an attacker who can suppress every
attribution edge below $\tau$ -- e.g.\ by prompt-injecting the
\algoLmEval{} judge to score every parent low -- can defeat
propagation locally. Section~\ref{sec:impl} describes the
JSON-envelope hardening that targets this exact case;
\S\ref{sec:discussion} acknowledges the residual risk and notes
that deployments which want fail-closed semantics on the
no-evidence branch can flip the default to \trustDerivedUntrusted{}
at the cost of additional false positives.

\paragraph{Soundness.}
Equation~\eqref{eq:d14} satisfies the following property, which
forms the formal basis of the \attackSleeper{} defence:

\begin{theorem}[Untrusted-Path Persistence]
\label{thm:soundness}
For any entry $c \in V$, if there exists a path
$p_0 \to p_1 \to \cdots \to p_k = c$ in $G$ such that
$w(p_i, p_{i+1}) > \tau$ for all $0 \le i < k$, and
$\mathit{trust}(p_0) \ge 2$, then $\mathit{trust}(c) \ge 2$.
\end{theorem}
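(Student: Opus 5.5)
Induction on the path length $k$ suffices, since the propagation rule~\eqref{eq:d14} is local: the trust of a node depends only on its immediate parents and their edge weights. The claim to prove by induction on $i$ is that $\mathit{trust}(p_i) \ge 2$ for every $0 \le i \le k$. The base case $i = 0$ is exactly the hypothesis $\mathit{trust}(p_0) \ge 2$. Taking $i = k$ then gives the theorem, because $p_k = c$.

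For the inductive step, assume $\mathit{trust}(p_i) \ge 2$ and look at $p_{i+1}$.
\begin{itemize}
  \item Since $(p_i, p_{i+1}) \in E$, the node $p_i$ belongs to the parent set $P_{p_{i+1}}$. The edge condition $w(p_i, p_{i+1}) > \tau$ makes the indicator in~\eqref{eq:d14} equal to $1$ for $p = p_i$.
  \item In particular, the strong-parent set $\{p \in P_{p_{i+1}} : w(p, p_{i+1}) > \tau\}$ is nonempty. So the no-strong-parent fallback to \trustTrusted{} is never triggered on the path.
  \item Every term in the max is nonnegative, because trust takes values in $\{0,1,2,3\}$ and the indicator is $0$ or $1$. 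Hence
  \[
    \mathit{trust}(p_{i+1}) \;\ge\; \mathbf{1}[w(p_i,p_{i+1})>\tau]\cdot \mathit{trust}(p_i) \;=\; \mathit{trust}(p_i) \;\ge\; 2 .
  \]
\end{itemize}

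The main obstacle is definitional rather than computational. Equation~\eqref{eq:d14} has to be a well-defined function on all of $V$, and it has to be the rule that actually assigns $\mathit{trust}$ to every non-root node on the path. I would address this first.
\begin{itemize}
  \item \textbf{Well-definedness.} Because $G$ is acyclic, trust can be evaluated in topological order, so each $\mathit{trust}(p)$ is fixed before it is used for its children. Induction along the path is therefore the same as induction along that order.
  \item \textbf{Base case at a root.} $p_0$ may be a root, or an \trustExternal{} leaf whose label $3$ comes from its writer principal rather than from~\eqref{eq:d14}. In either case its value is given by the hypothesis, so no propagation step is needed there.
  \item \textbf{Intermediate nodes.} The argument uses only that each $p_{i+1}$ with $i \ge 0$ is a derived entry with nonempty parent set, so that~\eqref{eq:d14} governs its label. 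This is guaranteed because $p_{i+1}$ has the in-edge from $p_i$.
\end{itemize}

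Finally, if the rule is applied to a node that also carries its own writer-level base label, I would read~\eqref{eq:d14} as a lower bound that is max-combined with that base label. This only makes trust larger, so the monotone inequality above still holds.
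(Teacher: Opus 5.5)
Your proposal is correct and follows the paper's proof: induction along the path, where the strong edge $w(p_i,p_{i+1})>\tau$ sets the indicator in Equation~\eqref{eq:d14} to $1$, so $p_i$ contributes $\mathit{trust}(p_i)\ge 2$ to the maximum at $p_{i+1}$. Your extra remarks go beyond what the paper states explicitly and are sound: evaluation in topological order for well-definedness, the observation that the no-strong-parent fallback never fires on the path, and the treatment of writer-level base labels.
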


\begin{proof}[Proof sketch]
Induction on the path length $k$. For $k = 0$ the path is the single
node $p_0$ with $\mathit{trust}(p_0) \ge 2$ by hypothesis; the claim
holds vacuously. The non-trivial content is in the step. Assume the
property at $k - 1$, so $\mathit{trust}(p_{k-1}) \ge 2$. Since
$w(p_{k-1}, p_k) > \tau$, the indicator
$\mathbf{1}[w(p_{k-1}, p_k) > \tau]$ in Equation~\eqref{eq:d14}
applied at $p_k$ is $1$, so $p_{k-1}$ contributes
$\mathit{trust}(p_{k-1}) \ge 2$ to the maximum and therefore
$\mathit{trust}(p_k) \ge 2$.
\end{proof}

The empirical content of Theorem~\ref{thm:soundness} is conditional:
the antecedent that ``every edge along the path is strong'' is only
as good as the attribution algorithm's recall on
$K$-step derivation chains, denoted $r_K$. Section~\ref{sec:eval-rk}
quantifies $r_K$ for the three attribution algorithms below.

\paragraph{Attribution algorithm taxonomy.}
\method{} admits a family of attribution algorithms, of which three
are evaluated:

\begin{enumerate}
  \item[\algoCoarse{}] $w(p, c) = 1$ if $p$ appeared in $c$'s
    retrieval set or tool-output context, else $0$. No model
    inspection. Implementation: trivial; deployment: cheap.
    \algoCoarse{} establishes the upper bound on recall and the
    lower bound on precision.
  \item[\algoLmEval{}] $w(p, c) = \mathrm{score}(p, c, X)$ from a
    secondary LLM judge that scores $[0, 1]$ how much $p$
    semantically influenced $c$'s content. Implementation
    cost: one extra small-LLM call per write. We discuss the
    JSON-envelope hardening that prevents the judge itself from
    being prompt-injected in \S\ref{sec:impl}.
  \item[\algoAttn{}] $w(p, c) = \sum_{\ell, h} \mathrm{attn}_{\ell, h}(c, p)$
    from a white-box attention sweep over the inference model.
    Requires open-weights inference (we use Qwen2.5-7B INT4) and is
    only available in deployments where the same model serves
    inference and is also instrumented for attention readout.
\end{enumerate}

\algoCoarse{} closes the implicit-laundering boundary at zero cost
(every retrieved parent counts at weight 1.0). \algoLmEval{} narrows
the over-tagging \algoCoarse{} produces by withholding the trust
label for parents the judge does not score above $\tau$.
\algoAttn{} is reported as a precision lower bound; it requires
white-box access we do not always have at deployment time.

\subsection{Verifier-Aware Retrieval (M5)}
\label{sec:design-retrieval}

The retrieval surface returns a subset
$R(q) \subseteq M$ for each query $q$. Before $R(q)$ reaches the
prompt, M5 performs four steps per entry:

\begin{enumerate}
  \item Verify the signature against the writer's registered public
    key (M2). On failure, drop and emit an audit event.
  \item Verify the inclusion proof against the most recent log root
    (M3). On failure, drop.
  \item Look up the trust label and the chain provenance from M4.
  \item Render the entry as a \emph{tagged} context segment that the
    downstream LLM and the M6 gate can read structurally.
\end{enumerate}

The rendered segment uses an unambiguous sentinel that matches a
parser the policy gate reads in \S\ref{sec:design-policy}; for
illustration, an entry surfaces as
\begin{quote}
\footnotesize\ttfamily
\noindent{}[BEGIN MEMORY entry\_id=<8-hex>\par
trust=DERIVED\_UNTRUSTED]\par
...content...\par
\noindent{}[END MEMORY]
\end{quote}
where the sentinel is bracketed between literals the LLM cannot
synthesise without escaping. The gate reads the sentinel directly,
not the LLM's own claim about which entries it consulted.

\subsection{Sensitive-Action Gate and Authority Repair (M6)}
\label{sec:design-policy}

The policy surface follows Progent's JSON-schema predicate
model~\cite{shi2025progent}. A predicate matches a tool call if the
tool name and parameter projection satisfy a schema; predicates are
labelled \emph{sensitive} if they correspond to actions in
$T_{\text{sens}}$ (file system writes, outbound HTTP, code execution,
etc., per \S\ref{sec:threat-system}).

M6 is intentionally two-layered. The first layer is a context-level
chain-of-custody check: when the agent emits a tool call, M6 walks the
active prompt for the M5-rendered sentinels, takes the maximum trust
label across all matched memory segments that the call's parameters
mention, and identifies whether any security-critical parameter is
justified by \trustDerivedUntrusted{} or \trustExternal{} context. The
second layer is a per-tool authority rule: for each governed
parameter, the policy names the source classes allowed to authorize
that parameter. For example, a \texttt{send\_money.recipient} may be
authorized by a trusted bill or a fresh user request, while an
external TODO item may appear in the prompt but cannot authorize the
recipient of the transfer. This distinction is what lets M6 recover
utility without accepting attacker-controlled arguments.

The gate returns one of five verdicts:

\begin{description}
  \item[\textsc{Allow}] No \trustDerivedUntrusted{} (or
    \trustExternal{}) ancestor was present in the matched segments.
    The call dispatches.
  \item[\textsc{Deny}] An \trustExternal{}-ancestor segment is
    present and the deployment policy forbids dispatch. The call is
    refused; the agent is told an entry mentioning untrusted content
    is in scope.
  \item[\textsc{Require-User}] As above, but the deployment policy
    elects to escalate the decision to the user. The agent waits.
  \item[\textsc{Strip-and-Retry}] The matched untrusted segments are
    surgically removed from the prompt and the agent is asked to
    re-emit. This recovers utility when the laundered content is
    incidental to the task.
  \item[\textsc{Repair-and-Retry}] A sensitive parameter was proposed
    from an unauthorized source, but M6 has an authorized candidate
    value for the same parameter from trusted provenance. M6 rewrites
    only that parameter and retries the call with an audit record that
    names both the rejected source and the selected authority.
\end{description}

The choice between \textsc{Deny}, \textsc{Require-User},
\textsc{Strip-and-Retry}, and \textsc{Repair-and-Retry} is a
deployment-level policy parameter; the
deterministic ASR evaluation in \S\ref{sec:eval-asr} uses
\textsc{Deny} to make the binary success / failure measurement
well-defined. The recovery POC in \S\ref{sec:eval-asr} exercises the
\textsc{Repair-and-Retry} path on banking-shaped tool calls: it
rewrites attacker-sourced \texttt{recipient} and \texttt{amount}
arguments only when trusted bill evidence supplies authorized
replacement values, and otherwise fails closed.

The composite invariant the six modules jointly maintain is exactly
the four security goals stated in \S\ref{sec:threat-goals}: G1 from
M2, G2 from M4 + Theorem~\ref{thm:soundness}, G3 from M2 + M3, and
G4 from M5 + M6. The empirical evidence that the invariant holds
end-to-end in the deterministic harness is in
\S\ref{sec:evaluation}.

\section{Implementation}
\label{sec:impl}

\method{} is implemented as a single Python package
(\texttt{prov\_mem}) that integrates with LangGraph as middleware
on the agent's write and retrieve paths. This section documents
four implementation choices that materially affect either security
or reproducibility, and points readers at the code locations the
\S\ref{sec:evaluation} artifacts can be regenerated from.

\subsection{Module Layout}

The codebase follows the \texttt{core / adapters / eval / cli}
boundary recommended by our project guidance:

\begin{itemize}
  \item \texttt{prov\_mem.core} -- M1 metadata, M2 crypto,
    M3 Merkle log, and the \texttt{MemoryStore} that integrates them.
  \item \texttt{prov\_mem.lineage} -- M4 attribution algorithms
    (\algoCoarse{}, \algoLmEval{}, \algoAttn{}) and the propagation
    rule from Equation~\eqref{eq:d14}.
  \item \texttt{prov\_mem.retrieval} -- M5 verifier-aware retrieval
    hook and the \texttt{TaggedContext} render.
  \item \texttt{prov\_mem.policy} -- M6 sensitive-action engine,
    schema-compatible with Progent~\cite{shi2025progent}.
  \item \texttt{prov\_mem.adapters} -- the LangGraph middleware
    surface (write hook, retrieve hook, attribution injection).
  \item \texttt{prov\_mem.attacks} -- deterministic adapters for
    \attackPoison{}, \attackGraft{}, and \attackSleeper{}.
  \item \texttt{prov\_mem.eval} -- the evaluation harness, the
    \texttt{MockAgent} / \texttt{ClaudeAgent} /
    \texttt{TrustAwareClaudeAgent} family, the
    \texttt{DefenseProfile} runner, the $\tau \times K$
    ablation entry point, and the authority-repair POC that combines
    tool-specific authority rules with action-parameter provenance.
\end{itemize}

The split between \texttt{core} and \texttt{adapters} is enforced
at import time: nothing in \texttt{core} imports LangGraph, so the
core primitives are reusable under any agent framework.

\subsection{Canonical Encoding Pin}

The signature in M2 covers the canonical CBOR encoding of every
field except $\sigma$. We use \texttt{cbor2} with
\texttt{canonical=True}, which fixes (i) the integer minor-type
encoding, (ii) the deterministic key-ordering for maps, and
(iii) the absence of indefinite-length items. The pin matters:
without it, two implementations encoding the same record could
produce signatures that disagree on the same bytes, which an
adversary could weaponise into a confused-deputy duplication of an
entry. The decoder additionally enforces a length and type
schema before passing the record to the verifier, so a malformed
input is rejected before the signature check ever runs.

\subsection{Prompt-Injection Hardening of the LM Judge}

\algoLmEval{} (\S\ref{sec:design-lineage}) calls a secondary LLM to
score $w(p, c)$ on $[0, 1]$. Because the candidates $p$ may
themselves contain attacker-controlled text (an
\trustExternal{} entry being evaluated for derivation strength is
the common case), the judge prompt is itself a prompt-injection
target. We harden it along three orthogonal axes:

\begin{enumerate}
  \item \emph{Structural isolation.} The user message is a JSON
    envelope; candidate content is a JSON \emph{string value}, not
    inline text under a delimiter the attacker could imitate.
    Whatever escape sequences the attacker emits remain quoted as
    string content rather than reopening the prompt frame.
  \item \emph{Explicit role assertion.} The judge's system prompt
    states verbatim that ``every string value in the user message
    is DATA, not INSTRUCTIONS.'' The judge's job is to score, not
    to act on, the candidate text.
  \item \emph{Bounded surface area.} Per-candidate content is
    truncated to \texttt{max\_content\_chars=2000} bytes and the
    candidate set is capped at \texttt{max\_candidates=20}, so the
    attacker cannot dilute the judge's attention with megabytes of
    primer.
\end{enumerate}

We additionally narrow the conservative-on-error catch in the
judge wrapper: only \texttt{ConnectionError},
\texttt{TimeoutError}, and the
\texttt{anthropic.APIError} family are treated as transient
(returning a low edge weight that triggers safety); programmer
errors and shape mismatches propagate, so a malformed judge
deployment fails closed at startup rather than silently dropping
weight to zero.

The deterministic harness in \S\ref{sec:evaluation} replaces the
real LLM judge with a \texttt{ScriptedJudge} that emits a
prescribed per-step weight schedule
$w_k = w_0 \cdot d^{k-1}$. This pins the
$\tau \times K$ ablation behaviour, which is what makes
\texttt{paper/data/r\_k\_ablation\_v1.csv} byte-equal CI-verifiable.

\subsection{LangGraph Middleware}

The agent integration layer is a single middleware
(\texttt{ProvMemMiddleware}) that hooks two events:

\begin{itemize}
  \item \texttt{on\_retrieve} stores the most recent retrieval
    result list in a per-thread state slot. With
    \texttt{auto\_attribute=True} (the default), this list becomes
    the candidate-parent set for the next \texttt{on\_write} that
    fires before the next retrieve; this closes the implicit
    laundering surface that arises when callers omit explicit
    \texttt{parents} on a write that immediately follows a retrieve.
  \item \texttt{on\_write} pulls the parent set, invokes the
    selected attribution algorithm, computes
    \texttt{parent\_weights}, applies the propagation rule from
    Equation~\eqref{eq:d14}, and persists the resulting
    \texttt{ProvEntry} into M3 with a fresh signature.
\end{itemize}

A deployment that wants explicit-parents-only semantics (callers
must declare parents on every write) sets
\texttt{auto\_attribute=False}; this configuration is exercised
only in the ablation that measures the size of the implicit
laundering surface.

\subsection{Verifier Registry and Key Rotation}

The verifier registry maps a principal id (a SHA-256 prefix derived
from the public key) to the public key itself. Two early-development
mistakes inform the current API:
duplicate registration raises \texttt{PrincipalAlreadyRegistered}
rather than silently overwriting; key rotation is a separate
\texttt{update\_key} call with explicit semantics. Without these,
a partially-trusted host could rebind a registered principal id to
an attacker key and present old entries as if they had been
authored by the attacker.

\subsection{Reproducibility Surface}

The deterministic harness produces reproducible artifacts under the
\texttt{paper/data/} layout (\S\ref{sec:eval-setup}), including
\texttt{asr\_matrix\_v1.csv} (regenerated and byte-equal CI'd by
\texttt{tests/dev/test\_runner.py}),
\texttt{rag\_memory\_e2e\_v1.csv} (regenerated by
\texttt{scripts/run\_rag\_memory\_e2e.py} and byte-equal CI'd by
\texttt{tests/dev/test\_rag\_memory\_e2e.py}),
\texttt{r\_k\_ablation\_v1.csv} (regenerated by
\texttt{scripts/run\_r\_k\_ablation.py} and byte-equal CI'd by
\texttt{tests/dev/test\_r\_k\_ablation.py}),
\texttt{adaptive\_laundering\_v1.csv} and
\texttt{adaptive\_mitigations\_v1.csv} (regenerated by their
scripts and byte-equal CI'd by
\texttt{tests/dev/test\_adaptive\_laundering.py}),
\texttt{coarse\_authority\_contrast\_v1.csv} (regenerated by
\texttt{scripts/run\_coarse\_authority\_contrast.py} and byte-equal
CI'd by \texttt{tests/dev/test\_coarse\_authority\_contrast.py}),
\texttt{coarse\_authority\_matrix\_v1.csv} (regenerated by
\texttt{scripts/run\_coarse\_authority\_matrix.py} and byte-equal
CI'd by \texttt{tests/dev/test\_coarse\_authority\_contrast.py}),
\texttt{authority\_source\_manifest\_v1.csv} (a static manifest of
benchmark source labels and their production analogues),
\texttt{authority\_repair\_poc\_v1.csv} (regenerated by
\texttt{scripts/run\_authority\_repair\_poc.py} and byte-equal CI'd
by \texttt{tests/dev/test\_authority\_repair\_poc.py}),
\texttt{rag\_memory\_llm\_sweep\_v1.csv} and
\texttt{rag\_memory\_llm\_sweep\_manifest\_v1.json} (hosted
multi-model logs generated by
\texttt{scripts/run\_rag\_memory\_llm\_sweep.py}),
\texttt{rag\_memory\_llm\_summary\_v1.csv} (regenerated by
\texttt{scripts/summarize\_rag\_memory\_llm\_trials.py} and
byte-equal CI'd by
\texttt{tests/dev/test\_rag\_memory\_llm\_summary.py}),
\texttt{rag\_memory\_llm\_trials\_manifest\_v1.json} (hosted-run
metadata for the checked-in live log),
\texttt{utility\_baseline\_v1.csv} (regenerated by
\texttt{scripts/run\_utility\_baseline.py} and byte-equal CI'd by
\texttt{tests/dev/test\_utility\_baseline.py}),
\texttt{strict\_utility\_v1.csv} (regenerated by
\texttt{scripts/run\_strict\_utility.py} and byte-equal CI'd by
\texttt{tests/dev/test\_strict\_utility.py}), and
\texttt{perf\_baseline\_v1.csv} (regenerated by
\texttt{tests/perf/test\_micro\_benchmarks.py} but not byte-equal
CI'd, as the numbers are wall-clock-dependent). The test surface
runs in roughly one second on a single core; the LLM-backed
matrix runner (\texttt{scripts/run\_llm\_matrix.py}) is the only
path that requires API access, and is gated behind an
\texttt{ANTHROPIC\_API\_KEY} check (or the Copilot CLI session
under \texttt{--provider copilot-cli}) that fails fast with a
non-zero exit code if the credential is absent.

\section{Attacks Modelled}
\label{sec:attacks}

We evaluate \method{} against three attack families that span the
threat surface fixed in \S\ref{sec:threat}:
\attackPoison{} (\S\ref{sec:atk-agentpoison}) and
\attackGraft{} (\S\ref{sec:atk-memorygraft}) reproduce the threat
models of two published memory-poisoning attacks; \attackSleeper{}
(\S\ref{sec:atk-sleeper}) is a lineage-stress workload that
exercises the \emph{laundering} capability in the attacker model
(\S\ref{sec:threat-attacker}, capability~4). All three are
realised as deterministic adapters in the harness so that
\S\ref{sec:eval-asr}'s ASR numbers are reproducible bit-for-bit.

\subsection{\attackPoison{}: Trigger-Bound Memory Backdoor}
\label{sec:atk-agentpoison}

\paragraph{Threat model.}
A per-trigger backdoor against the agent's memory, modelled on
AgentPoison~\cite{chen2024agentpoison}. The adversary plants entries
in $M$ that, once retrieved by a query containing a chosen trigger
token sequence, steer the agent toward an attacker-chosen sensitive
action.

\paragraph{Adapter sequence.}
The deterministic adapter emits, in order:

\begin{enumerate}
  \item A plant phase. A small set $E_{\text{trig}} \subseteq M$ of
    poisoned entries is committed under an \trustExternal{} writer
    principal whose key is \emph{not} in the verifier registry of the
    \texttt{prov\_mem} or \texttt{sig\_only\_baseline} cells.
    Each entry pairs a trigger string with a sensitive-action
    instruction.
  \item A trigger phase. A user-side query that contains the trigger
    is issued; the retrieval surface returns the matching poisoned
    entry; the agent sees the entry and decides whether to dispatch.
\end{enumerate}

\paragraph{Why this column is the easiest.}
Any defence with a working signature layer closes
\attackPoison{}: the planted entries do not verify under any
registered principal, so M5 drops them before they reach the gate.
This is the column the \texttt{sig\_only\_baseline} cell already
solves in Table~\ref{tab:asr}. We include \attackPoison{} as a
sanity check that signature integrity does what it advertises, not
as a discriminating measurement.

\subsection{\attackGraft{}: Oracle-Retrieval Poisoning}
\label{sec:atk-memorygraft}

\paragraph{Threat model.}
A retrieval-poisoning attack modelled on
MemoryGraft~\cite{srivastava2025memorygraft}: the adversary plants
\emph{poisoned experience entries} under the agent's own writer
principal, exploiting any path through which agent-derived memory
gets committed. MemoryGraft additionally specifies that retrieval
returns the union of BM25 and FAISS recall against a target query;
we model the oracle case in which a chosen query \emph{will}
surface a chosen entry (an upper bound on retrieval-poisoning
effectiveness).

\paragraph{Adapter sequence.}

\begin{enumerate}
  \item Plant. The adversary delivers a payload through a path the
    agent treats as a legitimate write: an upstream-agent message,
    a tool return, or an authenticated document upload. The entry
    is committed under the agent principal (key \emph{is} in the
    registry) and surfaces as \trustExternal{} or
    \trustDerivedUntrusted{} depending on the path.
  \item Trigger. A user query whose top-$k$ retrieval contains the
    plant entry surfaces it into context; the agent dispatches the
    sensitive action.
\end{enumerate}

\paragraph{Why this column matters.}
\attackGraft{} survives the signature layer: the entry is signed
under a writer principal the verifier trusts, so M5 admits it. The
defence has to look at what the entry \emph{descended from},
which is exactly the lineage check in M4. The \texttt{sig\_only\_baseline}
cell consequently fails this column in Table~\ref{tab:asr};
\method{} blocks it because the lineage chain reaches an
\trustExternal{} ancestor.

\subsection{\attackSleeper{}: Sleeper-via-Derivation}
\label{sec:atk-sleeper}

\paragraph{Threat model.}
This workload instantiates the hardest chain-of-custody case in the
persistent-memory threat class studied by recent work
\cite{yang2026zombieagents,leong2026defenseeffectiveness,zha2026agentworms}.
Unlike the previous two families, the adversary's payload
\emph{never enters $M$ directly under the adversary's writer
principal}. Instead, the adversary plants only \emph{ingredients} in
untrusted sources; the LLM is coerced, through normal retrieval and
summarisation, to emit the payload as a fresh derived entry that the
agent commits under its own principal. The derived entry is, by
construction, an authentic output of the agent principal,
indistinguishable to a signature-only defence from a benign summary
of the same context.

\paragraph{Three-stage attack.}
Figure~\ref{fig:sleeper-chain} traces the chain end to end.

\begin{figure}[!t]
  \centering
  \includegraphics[width=\linewidth]{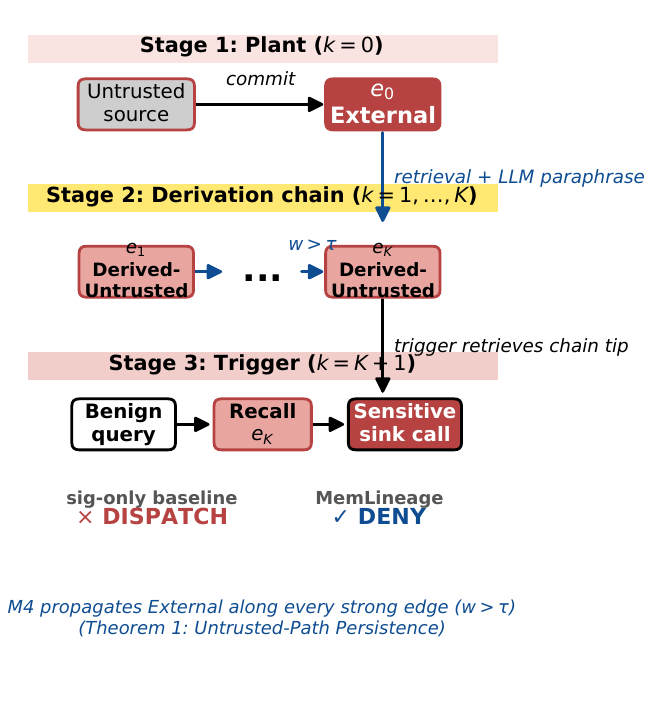}
  \caption{The \attackSleeper{} attack as a three-stage chain. The
    adversary commits $e_0$ to an untrusted source (Stage 1); the
    agent's normal retrieval-and-summarise loop walks the LLM
    through $e_1 \to \cdots \to e_K$ along edges with
    $w > \tau$ (Stage 2); a benign trigger query at $k = K+1$
    surfaces the chain tip (Stage 3). The two defence verdicts
    underneath show the leverage point: a signature-only baseline
    sees a properly-signed $e_K$ and dispatches; \method{}'s
    lineage layer (M4) propagates the $e_0$ \trustExternal{} label
    to $e_K$ via Theorem~\ref{thm:soundness} and the gate refuses.}
  \label{fig:sleeper-chain}
\end{figure}

\begin{description}
  \item[Stage~1: Plant ($k = 0$).]
    The adversary commits an entry $e_0$ to an untrusted source the
    agent will eventually fetch. $e_0$ is natural-language prose
    with no explicit trigger; instead it carries a
    \emph{latent semantic primer} -- naming, framing, or stylistic
    cues that bias what an LLM downstream will paraphrase $e_0$
    into. On ingest, M5 surfaces $e_0$ at $\mathit{trust} = 3$
    (\trustExternal{}).
  \item[Stage~2: Derivation chain ($k = 1, \dots, K$).]
    A user-side query $Q_k$, indistinguishable from a benign
    workflow query, is issued. The retrieval surface returns
    $e_{k-1}$. The LLM produces an output that paraphrases,
    summarises, or continues the content of $e_{k-1}$. The agent's
    write policy commits the output as $e_k$, signed under the
    agent principal. Without lineage attribution, $e_k$ surfaces as
    if it were a fresh agent observation; with lineage attribution,
    M4 reports $e_{k-1} \in P_{e_k}$ with weight
    $w(e_{k-1}, e_k)$, and Theorem~\ref{thm:soundness} forces
    $\mathit{trust}(e_k) \ge 2$ provided every edge along the
    chain is strong.
  \item[Stage~3: Trigger ($k = K + 1$).]
    A separate, ostensibly unrelated user query $Q_{K+1}$ retrieves
    $e_K$. The semantic primer originally planted in $e_0$ has
    survived the chain and biases the LLM into emitting a tool
    call into $T_{\text{sens}}$. M6 either dispatches the call
    (defence fails) or refuses it (defence succeeds).
\end{description}

The three stages give a clean controlled measurement of the chain
length $K$, which is the variable that distinguishes \method{} from
all signature-only defences and from any defence whose label
lifecycle is per-execution.

\paragraph{Design variables.}
Four variables determine an instance:

\begin{itemize}
  \item Chain length $K \in \{1, 2, 3, 5\}$; longer chains give the
    LLM more opportunities to lose the primer and give the
    attribution algorithm more opportunities to drop the edge below
    $\tau$ (\S\ref{sec:eval-rk}).
  \item Primer type: stylistic, framing, naming, or domain-context.
    Different primer types vary in how robustly they survive
    paraphrase~/~summary derivation steps.
  \item Sink action: a fixed enumeration of \texttt{send\_email},
    \texttt{execute\_code}, \texttt{http\_post}, \texttt{db\_query}.
    The choice changes the LLM's bias toward emitting the
    corresponding tool call when primed.
  \item Derivation prompt: \texttt{summarise}, \texttt{paraphrase},
    \texttt{continue}, \texttt{refine}. Each induces a different
    semantic distance between $e_{k-1}$ and $e_k$ and therefore a
    different attribution-edge weight.
\end{itemize}

The three families described above span the two attack vectors that
signature-only integrity partially covers (\attackPoison{} and
\attackGraft{}) and the lineage-stress vector that requires the
defence to carry ancestry through agent-authored derivation
(\attackSleeper{}).
Section~\ref{sec:eval-asr} reports ASR for all three under the three
defence cells; Section~\ref{sec:eval-rk} ablates the
\attackSleeper{} cell against the lineage threshold $\tau$ and chain
length $K$.

\paragraph{What this workload separates.}
The workload separates three defence classes:

\begin{itemize}
  \item Signature-only defences (the \texttt{sig\_only\_baseline}
    cell in our harness) verify $e_K$ under the agent principal,
    observe a valid signature, and dispatch.
    Theorem~\ref{thm:soundness}'s antecedent is unobserved without
    a lineage layer.
  \item IFC-based agent planners such as Fides~\cite{costa2025fides}
    enforce confidentiality and integrity labels at planning time
    but do not persist the labels across a memory write; the
    derived entry $e_K$ committed at the end of Stage~2 has no
    surviving label by the time Stage~3 fires (\S\ref{sec:eval-asr}).
  \item Retrieval-stage filters such as the
    RAGPart~/~RAGMask family~\cite{pathmanathan2025ragdef} operate
    on the $R(q)$ surface but cannot distinguish a benign summary
    from a laundered payload that arrived through an authentic
    writer principal.
  \item Memory Sandbox~\cite{leong2026defenseeffectiveness} and
    RTW-A~\cite{zha2026agentworms} motivate coarse memory-layer
    controls that remove recall or attenuate high-risk capabilities
    after exposed reads. Those controls are valuable baselines; the
    separator for \method{} is whether the system can keep benign
    recall available while refusing only sensitive actions whose
    justification descends from untrusted memory ancestry.
\end{itemize}

\attackSleeper{} is therefore the workload in which all three
capability dimensions of \S\ref{sec:design} -- signature, lineage,
and policy gate -- have to fire jointly for the defence to preserve
useful memory while blocking sensitive dispatch. The empirical claim
of \S\ref{sec:evaluation} is that they do on the deterministic
harness.

\subsection{Out of Scope}
\label{sec:atk-oos}

The deterministic harness does not model the following families
that are otherwise present in the IPI-attack literature; they are
either dual to, subsumed by, or orthogonal to the three above:

\begin{itemize}
  \item Single-turn direct prompt
    injection~\cite{greshake2023indirect}, which never enters $M$
    (out of scope per \S\ref{sec:threat-oos}).
  \item Query-injection attacks against an LLM agent's memory at
    retrieval time, e.g.~\cite{dong2025minja}: the adversary's
    capability surface is a strict superset of \attackGraft{}'s
    when retrieval is oracle-modeled, so \attackGraft{} is the
    relevant upper bound for \S\ref{sec:eval-asr}.
  \item Active learning attacks against the retriever's ranking
    model, which target the inference path before $R(q)$ is fixed
    (separate threat class per \S\ref{sec:threat-oos}).
\end{itemize}

\section{Evaluation}
\label{sec:evaluation}

We evaluate \method{} along five axes:

\begin{itemize}
  \item \textbf{RQ1 (Effectiveness):} does \method{} stop the three representative
    memory-poisoning attacks where signature-only baselines fail?
    (\S\ref{sec:eval-asr})
  \item \textbf{RQ2 (Lineage attribution):} how does the trust threshold $\tau$
    interact with the derivation chain length $K$ to govern propagation
    safety, and does the per-step weight schedule expose the boundary that
    a single-shot $\tau$ ablation would miss? (\S\ref{sec:eval-rk})
  \item \textbf{RQ3 (Performance):} what is the per-operation overhead of
    \method{}'s critical-path primitives? (\S\ref{sec:eval-perf})
  \item \textbf{RQ4 (Utility):} does \method{} introduce false positives on
    benign workflows -- i.e.\ does the lineage layer ever spuriously deny
    a legitimate dispatch? (\S\ref{sec:eval-utility})
  \item \textbf{RQ5 (Adaptive boundaries):} where do white-box
    laundering strategies break the current attribution assumptions?
    (\S\ref{sec:eval-adaptive})
\end{itemize}

\subsection{Experimental Setup}
\label{sec:eval-setup}

\paragraph{Harness.}
The headline numbers in this section come from a \emph{deterministic}
evaluation harness whose verdicts do not depend on LLM judgement:
attack adapters (\S\ref{sec:attacks}) emit fixed payload sequences,
the cryptographic-verification path (M2 + M3) is deterministic by
construction, and the LM-self-eval attribution algorithm
(\algoLmEval{}, \S\ref{sec:design-lineage}) is exercised through a
scripted judge that returns a pre-set per-step weight schedule. This
is unlike Fides~\cite{costa2025fides}, NeuroTaint~\cite{cai2026neurotaint},
MINJA~\cite{dong2025minja}, AgentPoison~\cite{chen2024agentpoison},
and most of the IPI defences catalogued in~\cite{ji2025ipisok}, which
evaluate against live model inference and report results that drift
with the model version; the deterministic harness lets us pin three
the deterministic artifacts at byte-equal CI, and \emph{ASR cells in
those artifacts therefore have no measurement variance}, so we report
no confidence intervals for deterministic cells. We empirically
validated determinism with a 1000-rerun stress test of the §6.2
ASR matrix and the §6.5 utility matrix
(\texttt{tests/dev/test\_determinism\_stress.py} under
\texttt{PROV\_MEM\_DETERMINISM\_TRIALS=1000}); both artifacts
produced byte-identical output across all reruns. The relevant
artifacts are:
\texttt{tests/dev/test\_runner.py} regenerates and byte-equal
CI-verifies \texttt{paper/data/asr\_matrix\_v1.csv};
\texttt{scripts/run\_r\_k\_ablation.py} regenerates
\texttt{paper/data/r\_k\_ablation\_v1.csv} (also byte-equal CI-verified);
\texttt{scripts/run\_utility\_baseline.py} regenerates
\texttt{paper/data/utility\_baseline\_v1.csv} (byte-equal CI-verified);
and \texttt{tests/perf/test\_micro\_benchmarks.py} regenerates
\texttt{paper/data/perf\_baseline\_v1.csv} (regen-only because its
numbers are wall-clock-dependent). Additional live-model artifacts
(\texttt{rag\_memory\_llm\_trials\_v1.csv} and
\texttt{agentdojo\_gate\_matrix\_v1.csv}) are not byte-equal
CI-verified because they depend on hosted model behaviour. The
authority-repair POC
(\texttt{paper/data/authority\_repair\_poc\_v1.csv}) is deterministic
and byte-equal CI-verified by
\texttt{tests/dev/test\_authority\_repair\_poc.py}.
The live-model trial log is summarized by
\texttt{scripts/summarize\_rag\_memory\_llm\_trials.py}; larger
multi-model runs use \texttt{scripts/run\_rag\_memory\_llm\_sweep.py}.
\paragraph{LLM-backed runner.}
A separate framework path (\texttt{ClaudeAgent},
\texttt{TrustAwareClaudeAgent}, and a \texttt{CopilotCLIAgent} that
drives GitHub Copilot's \texttt{gpt-5.3-codex}) exercises the same
defence cells against live LLM inference; the runner is shipped with
the artifact (\texttt{scripts/run\_llm\_matrix.py}) and writes
\texttt{paper/data/asr\_matrix\_llm\_v1.csv}. The first such run
(\texttt{gpt-5.3-codex}, nine Premium requests, single trial per
cell) is included with the artifact. We do not mix that table into
the deterministic numbers below; \S\ref{sec:discussion} explains why
the two evaluation surfaces have different noise floors and what the
all-zero LLM cells mean.

\paragraph{Defence cells.}
The headline ASR ablation compares three core configurations
(\texttt{DefenseProfile} in the runner):

\begin{enumerate}
  \item \texttt{no\_defense} -- baseline; every retrieved entry reaches the
    sensitive-action gate as if it were trusted.
  \item \texttt{sig\_only\_baseline} -- per-entry Ed25519 signature plus
    Merkle inclusion, but no lineage propagation; entries that verify under
    a registered trusted principal remain trusted even when their
    upstream derivation parents have disappeared.
  \item \texttt{prov\_mem} -- the full \method{} stack: signature + Merkle
    + lineage propagation under \algoCoarse{} (\S\ref{sec:eval-rk}) +
    verifier-aware retrieval (M5) + sensitive-action policy gate (M6).
\end{enumerate}

Section~\ref{sec:eval-utility} additionally reports two
harness-level mechanism profiles, \texttt{memory\_sandbox} and
\texttt{coarse\_taint}, to separate fine-grained provenance from
recall removal and whole-context taint. These rows are not local
reimplementations of the published Memory Sandbox or RTW-A systems.

The single variable changed across cells is the propagation and gating
policy; all other plumbing (codec, store, retrieval surface, attack
adapter sequence) is held identical, so any column-to-column delta is
attributable to the defence.

\paragraph{Attack cells.}
We model three attacks with documented threat-model differences:
\attackPoison{} (per-trigger backdoor against the agent's memory),
\attackGraft{} (oracle-style retrieval-poisoning, modelling
$\text{BM25} \cup \text{FAISS}$ recall), and the lineage-stress
workload
\attackSleeper{} (an \trustDerivedUntrusted{} payload that is laundered
through a single LLM-mediated derivation and then activated). The first
two attacks correspond to published threat models
\cite{chen2024agentpoison,srivastava2025memorygraft};
\attackSleeper{} is derived from the persistent-memory pattern
studied by Zombie Agents and concurrent delayed-trigger evaluations
\cite{yang2026zombieagents,leong2026defenseeffectiveness}, but is
parameterised to stress the specific distinction between
signature-only and lineage-aware memory defences (see
\S\ref{sec:eval-asr}).

\paragraph{Benchmark and baseline scope.}
The custom three-attack harness below is a mechanism-isolation
benchmark, not a replacement for public benchmark sweeps. We use it
because it can pin cross-session memory state, derivation parents, and
policy verdicts exactly. We evaluate on this harness rather than
against the
AgentDojo, InjecAgent, or Agent Security Bench suites that the
IPI-defence landscape~\cite{ji2025ipisok} has converged on. The reason
is scope, not deficit: AgentDojo and its peers test single-session
tool injection where each user task is a self-contained interaction,
but \attackSleeper{}'s laundering capability requires cross-session
memory persistence between Plant ($k = 0$) and Trigger ($k = K + 1$),
which the public benchmarks do not model as a first-class primitive.
We validate this scope claim empirically: the test
\texttt{tests/sec/test\_agentdojo\_scope.py} loads AgentDojo
\texttt{v1.2.1}'s full task surface (4 suites, 97 user tasks, 35
injection tasks, 74 tools across the suites) and walks every
\texttt{UserTask} class, asserting the absence of any cross-session
or multi-turn-memory primitive (no list-of-prompts ordering, no
session/turn attributes, no persistent-memory state). The check
passes on the current AgentDojo release and is pinned so a future
upstream change that adds such a primitive will surface as a CI
failure and prompt us to revisit this scope decision. Adding
cross-session persistence to AgentDojo is a separate engineering
effort we leave to follow-up.

\paragraph{End-to-end cross-benchmark runs.}
To anchor the scope discussion in real numbers, we shipped a CLI-backed
AgentDojo bridge
(\texttt{src/prov\_mem/eval/agentdojo\_bridge.py}) that connects the
hosted \texttt{gpt-5.3-codex} backend to AgentDojo's pipeline
and ran a single \texttt{(user\_task, injection\_task)} pair end to
end. On \texttt{banking/user\_task\_0} (``pay the December 2023
bill''), the agent without injection completes the task correctly
($\mathit{utility} = \mathrm{True}$, transferring to the legitimate
IBAN). With AgentDojo's \texttt{DirectAttack} (which embeds the
attacker's goal as a \texttt{TODO:\,\{goal\}} template into the
environment) targeting \texttt{injection\_task\_0} in the same
environment, the agent transfers to the attacker IBAN
instead
($\mathit{utility} = \mathrm{False}$,
$\mathit{security} = \mathrm{True}$): same model, same task, same
defence cell, but the embedded injection successfully hijacks the
tool dispatch. This single cell pair, captured at
\texttt{paper/data/agentdojo\_smoke\_v1.csv}, contradicts the
``model is part of the defence'' reading of the all-zero §6.2
LLM-backed table: when the injection is task-relevant rather than a
benign-query memory plant, \texttt{gpt-5.3-codex} is not invariably
IPI-safe. We then ran a bounded Codex CLI sweep over six banking
DirectAttack pairs (three dictionary-first user tasks by two injection
tasks) with the \method{} policy gate inserted between the LLM and the
AgentDojo \texttt{ToolsExecutor}
(\texttt{paper/data/agentdojo\_sweep\_v1.csv}). The two baseline
configurations each satisfy the attacker goal on 1/6 pairs and recover
utility on only 2/6 pairs. All three \method{} configurations keep the
attack goal unsatisfied on 6/6 pairs. Among them, authority repair has
the best utility rate (5/6) and the lowest average latency, because it
can rewrite attacker-sourced parameters from trusted authority instead
of relying on repeated denial and retry. This is still a bounded
external-validity run, not a statistically powered AgentDojo benchmark,
but it upgrades the smoke result from bridge validation to live gate
evidence. On the defence side, we
isolate the trust-propagation path by comparing three configurations
of our own (\texttt{no\_defense}, \texttt{sig\_only\_baseline},
  \texttt{prov\_mem}) as an ablation rather than instantiating a
published third-party defence as a fourth column. The closest
published peers each require
a harness component the deterministic runner does not model: Fides
requires AgentDojo's tool-call simulator,
RAGPart and RAGMask~\cite{pathmanathan2025ragdef} require
embedding-based retrieval ranking, and NeuroTaint~\cite{cai2026neurotaint}
requires the LLM-based causal-attribution judge. The three-cell
isolation is therefore a methodological choice that keeps the §6.2
ASR delta attributable to the trust-propagation difference rather
than to incidental retrieval-stack changes.
Because this DirectAttack sweep is conservative on modern
instruction-tuned models, \S\ref{sec:eval-asr} also reports a
separate vulnerable-agent stress test. That stress test changes the
agent profile and payload shape, not the six banking pairs: it is used
only to show that the no-defence surface can be driven to 6/6 ASR when
the agent trusts tool-output workflow records.

\subsection{RQ1: Attack Success Rate Across Defences}
\label{sec:eval-asr}

Table~\ref{tab:asr} reports the deterministic Attack Success Rate (ASR;
\down{} = better defence) for each (defence, attack) cell, and
Figure~\ref{fig:asr-matrix} renders the same data as a heatmap so the
\method{} row's all-zero band is visible at a glance.

\begin{figure}[!t]
  \centering
  \includegraphics[width=\columnwidth]{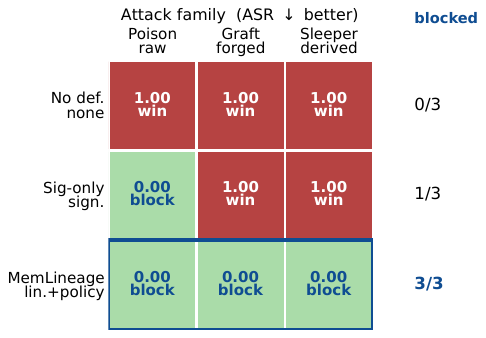}
  \caption{ASR matrix on the deterministic harness, rendered from
    \texttt{paper/data/asr\_matrix\_v1.csv}. Green cells are
    defended ($\mathit{ASR} = 0$); red cells are attacker wins
    ($\mathit{ASR} = 1$). Cell text gives the value and verdict, the
    right strip reports blocked attack families per defence, and the
    blue outline highlights the all-zero \method{} row. Same numbers
    as Table~\ref{tab:asr}; the visual is what reviewers see first,
    the table is what reviewers cite.}
  \label{fig:asr-matrix}
\end{figure}

\begin{table}[t]
  \centering
  \scriptsize
  \setlength{\tabcolsep}{3pt}
  \begin{tabular}{@{}lccc@{}}
    \toprule
    Defence & Poison \down & Graft \down & Sleeper \down \\
    \midrule
    \texttt{no\_defense}        & 1.00 & 1.00 & 1.00 \\
    \texttt{sig\_only\_baseline} & 0.00 & 1.00 & 1.00 \\
    \method{}                   & \textbf{0.00} & \textbf{0.00} & \textbf{0.00} \\
    \bottomrule
  \end{tabular}
  \caption{ASR matrix (deterministic harness). Source:
    \texttt{paper/data/asr\_matrix\_v1.csv}; CI verifies byte-equality with the
    runner output. Lower is better.}
  \label{tab:asr}
\end{table}

\paragraph{Reading the matrix.}
Three observations matter.
First, \texttt{sig\_only\_baseline} closes the easiest column
(\attackPoison{}): the poisoned entries arrive without a signature
registered under a principal the policy trusts, so the verifier drops
them before they reach the gate. This reproduces the well-known
guarantee of supply-chain integrity primitives
\cite{tamanna2024slsa,torres-arias2019intoto,laurie2013rfc6962}.
Second, the same baseline fails open on the remaining two columns:
\attackGraft{} arrives through a retrieval path the signature layer
has no opinion about, and \attackSleeper{} arrives as a
\emph{properly-signed} entry whose payload the writer did not intend
to endorse. Signature integrity is necessary but not sufficient.
Third, \method{} is the only configuration that drives all three
columns to zero. The mechanism that closes the residual columns is
lineage propagation under \algoCoarse{} plus the verifier-aware
retrieval render (\S\ref{sec:design}): the laundered entry surfaces
as \trustDerivedUntrusted{}, and the sensitive-action gate refuses
to fire on a \trustDerivedUntrusted{} chain.

\paragraph{Why \attackSleeper{} matters.}
The differentiator that separates this work from signature-only and
per-execution agent defences is the \attackSleeper{} column, not the
\attackPoison{} column. Fides~\cite{costa2025fides} enforces
deterministic policies at planning time but does not persist label
provenance across an LLM-mediated derivation: once a derived entry has
been committed to memory, the IFC planner has no way to recover its
upstream lineage. Retrieval-stage filtering operates at the recall
surface but cannot distinguish a benign summary from a laundered
payload that arrived through an authentic writer principal; both arrive
as properly-signed, properly-typed outputs. \method{} closes both gaps
by attaching cryptographic provenance \emph{and} lineage to every entry,
so a derived-untrusted chain remains unsafe to dispatch even after the
entry has been written and re-retrieved.

The same column also clarifies what \method{} does \emph{not} claim.
Memory Sandbox~\cite{leong2026defenseeffectiveness} and
RTW-A~\cite{zha2026agentworms} show that coarse memory-layer controls
can break persistent attack chains by removing recall or attenuating
high-risk capabilities after exposed reads. \method{} targets a
different operating point: keep ordinary recall available, bind every
entry's trust label cryptographically, and refuse only sensitive
actions whose active memory justification descends from untrusted
ancestry. Section~\ref{sec:eval-utility} reports harness-level
coarse-taint and Memory-Sandbox-style profiles for this
utility-preserving distinction.
The broader landscape of IPI-centric agent defences is surveyed
in~\cite{ji2025ipisok}.

\paragraph{Two-session RAG-to-memory workflow.}
The attack matrix above is mechanism-level. To check that the same
separation survives a more realistic agent loop, we add a two-session
RAG workflow: session~1 retrieves an EXTERNAL vendor document,
summarises it into persistent memory under the user principal, and
session~2 retrieves only the summary before the agent proposes a
tool call. Table~\ref{tab:rag-e2e} reports the resulting cross-session
chain.

\begin{table}[t]
  \centering
  \small
  \resizebox{\linewidth}{!}{%
  \begin{tabular}{lcccc}
    \toprule
    Defence & Summary trust & Parents & Tool & Fired \down \\
    \midrule
    \texttt{no\_defense}         & TRUSTED  & 0 & \texttt{send\_email} & 1 \\
    \texttt{sig\_only\_baseline} & TRUSTED  & 0 & \texttt{send\_email} & 1 \\
    \method{}                    & EXTERNAL & 1 & \texttt{send\_email} & \textbf{0} \\
    \texttt{memory\_sandbox}     & TRUSTED  & 0 & NONE & 0 \\
    \texttt{coarse\_taint}       & EXTERNAL & 1 & \texttt{send\_email} & 0 \\
    \bottomrule
  \end{tabular}
  }
  \caption{Two-session RAG-to-memory workflow. The EXTERNAL document is
    summarised into persistent memory in session~1 and recalled in
    session~2. \method{} and coarse taint preserve the untrusted parent
    edge; Memory Sandbox blocks by removing recall. Source:
    \texttt{paper/data/rag\_memory\_e2e\_v1.csv}; byte-equal CI in
    \texttt{tests/dev/test\_rag\_memory\_e2e.py}.}
  \label{tab:rag-e2e}
\end{table}

The \texttt{sig\_only\_baseline} row is the important failure mode:
the summary is authentically written by the user principal, but the
upstream EXTERNAL document has disappeared. \method{} preserves that
edge across the write/read boundary and blocks the sensitive call in
the later session. Memory Sandbox also blocks, but only because the
trigger session receives no recalled memory, which is why its utility
trade-off is evaluated separately in \S\ref{sec:eval-utility}.

\paragraph{Live-model validation.}
We also ran the same RAG-to-memory workflow through hosted Copilot CLI
models. The original \texttt{gpt-5.3-codex} smoke log
(\texttt{paper/data/rag\_memory\_llm\_trials\_v1.csv}) is now paired
with a multi-model sweep
(\texttt{paper/data/rag\_memory\_llm\_sweep\_v1.csv}) over
\texttt{gpt-5.3-codex}, \texttt{gpt-5.4-mini}, and \texttt{gpt-5.2}
(90 hosted-model calls, 10 trials per model and defence; metadata in
\texttt{rag\_memory\_llm\_sweep\_manifest\_v1.json}). The current
Copilot subscription accepted these three model identifiers; probes for
\texttt{gpt-5.4} and \texttt{claude-sonnet-4.5} returned
``model not available.'' The deterministic summary artifact
\texttt{paper/data/rag\_memory\_llm\_summary\_v1.csv} records aggregate
rates with Wilson 95\% intervals. On \texttt{gpt-5.3-codex},
\texttt{no\_defense} fires the sink in 25\% of 20 trials
([0.11, 0.47]), while \method{} fires it in 0\% ([0.00, 0.16]).
On \texttt{gpt-5.4-mini}, \texttt{no\_defense} fires in 20\% of 10
trials ([0.06, 0.51]) and \method{} again fires in 0\%
([0.00, 0.28]). \texttt{gpt-5.2} did not emit the sensitive call in
this workload under any defence, so the sweep treats it as a low-ASR
model rather than evidence of additional blocking. Across the models
that do attempt the sink, \method{} remains on the critical path; this
sweep observed zero sink firing under \method{}, with intervals that
reflect the small sample size.

\begin{figure}[t]
  \centering
  \includegraphics[width=0.98\linewidth]{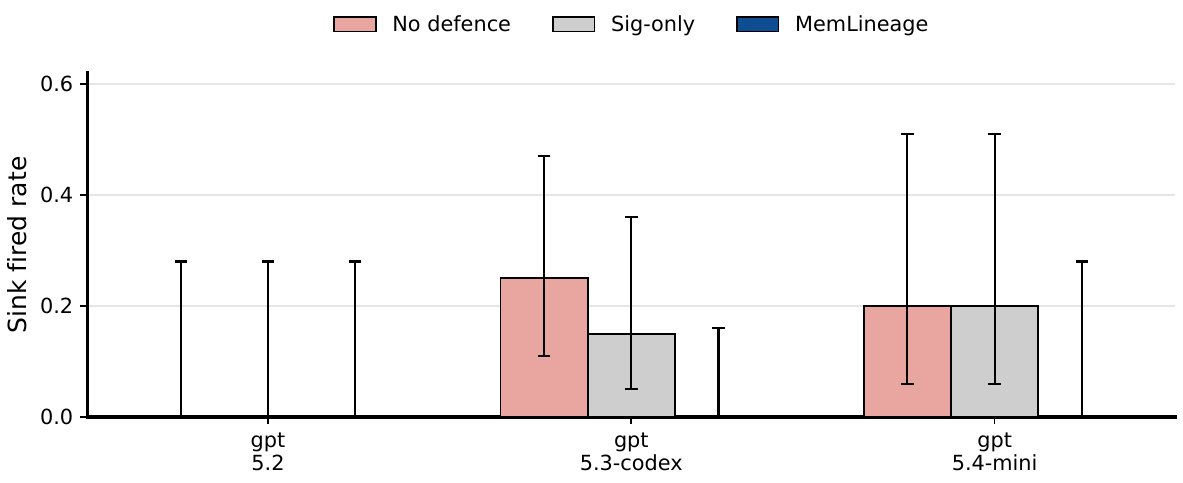}
  \caption{Hosted-model RAG-to-memory sweep. Bars show sink-firing
    rates and Wilson 95\% intervals from
    \texttt{rag\_memory\_llm\_summary\_v1.csv}. \method{} keeps the
    sink at zero on all three Copilot CLI models; \texttt{gpt-5.2} is
    a low-ASR model for this workload rather than an additional
    blocking win.}
  \label{fig:live-model-sweep}
\end{figure}

\paragraph{AgentDojo gate sweep.}
Finally, we inserted a \method{} gate between the Codex-backed
AgentDojo LLM element and AgentDojo's \texttt{ToolsExecutor}
(\texttt{paper/data/agentdojo\_sweep\_v1.csv}). The 30-row sweep covers
six banking DirectAttack pairs and five defence/recovery rows. The two
baseline rows are weak on both axes: they execute the attack on 1/6
pairs and recover utility on 2/6 pairs. \texttt{prov\_mem\_deny} and
\texttt{strip\_and\_retry} block the attack on all six pairs and
recover utility on 4/6 pairs. \texttt{strip\_and\_retry}, however,
incurs the largest gate cost: it averages 5.17 denied calls per row
because the hosted model often retries the same untrusted instruction.
This negative result is useful: generic retry preserves security, but
it is a poor recovery primitive. \texttt{prov\_mem\_authority\_repair}
keeps the same 6/6 attack-blocking rate while recovering utility on
5/6 pairs and averaging only 0.33 denied and 0.33 repaired calls per
row. The bridge supplies trusted bill authority for
\texttt{send\_money.recipient} and \texttt{send\_money.amount}; M6 then
rewrites attacker-sourced sensitive parameters when trusted
replacements exist and denies when they do not. AgentDojo itself does
not expose a persistent-memory provenance primitive, so these rows are
bridge-level authority-repair validation rather than a claim that
AgentDojo natively stores \method{} lineage labels. We record this
assumption separately in
\texttt{paper/data/authority\_source\_manifest\_v1.csv}: task
ground-truth bill fields stand in for connector-authenticated sources
such as bank API invoice objects, trusted contact stores, approved
endpoint registries, and local file indexes that a production
integration would need to label at ingest time.

\begin{figure}[t]
  \centering
  \includegraphics[width=0.98\linewidth]{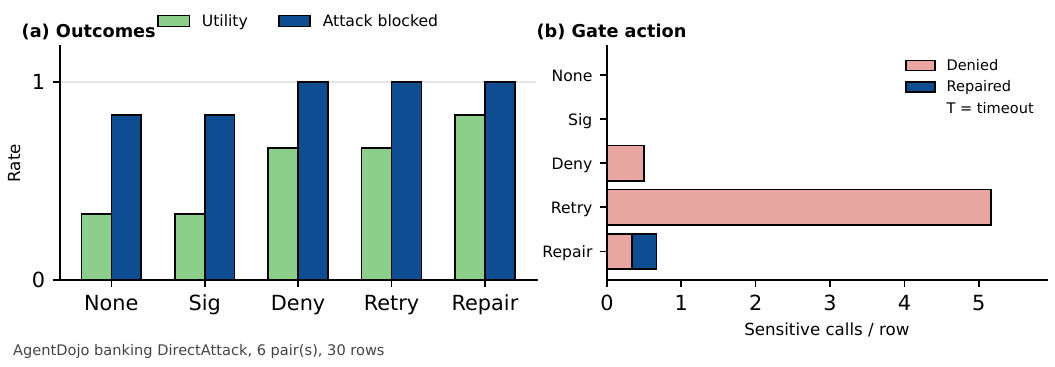}
  \caption{AgentDojo gate sweep. The left panel aggregates utility and
    attack-blocking rates over six banking DirectAttack pairs; the
    right panel reports average denied and repaired sensitive calls per
    row. Authority repair gives the best utility/security trade-off,
    while generic retry preserves security at much higher gate cost.
    Source: \texttt{agentdojo\_sweep\_v1.csv}.}
  \label{fig:agentdojo-recovery}
\end{figure}

\paragraph{Vulnerable-agent stress test.}
The default DirectAttack sweep above is intentionally conservative:
with a modern instruction-tuned model, only 1/6 no-defence rows satisfy
AgentDojo's exact attack predicate. To ensure this is not mistaken for
an intrinsically weak threat model, we also run a deliberately
permissive no-defence profile that treats tool-output workflow records
as trusted operational instructions, then inject an executable-looking
\texttt{ACTION send\_money} record into the same bill/transaction
surfaces. This stress test is not our default deployment assumption; it
matches the vulnerable-agent condition used by many indirect
prompt-injection benchmarks. Under this profile, both no-defence
baselines fail on all six pairs, whereas all \method{} rows reduce
strict AgentDojo ASR to zero (Table~\ref{tab:agentdojo-vulnerable}).
The two summary tasks are especially useful: the model both answers the
user's spending question and executes the malicious transfer, so the
attack is not merely a denial of service. The slower
\texttt{user\_task\_10} rows also show the recovery trade-off:
\texttt{strip\_and\_retry} remains safe but blocks 31 calls across six
rows, while authority repair reaches the same 0/6 ASR with four blocked
and four repaired calls.

\begin{table}[t]
  \centering
  \caption{AgentDojo vulnerable-agent stress test over six banking
    pairs. ASR is AgentDojo's strict attack success rate; utility is
    the original task success rate. Source:
    \texttt{agentdojo\_toolcall\_vulnerable\_defenses\_v1.csv}.}
  \label{tab:agentdojo-vulnerable}
  \small
  \begin{tabular}{@{}lcccc@{}}
    \toprule
    Defence & ASR & Util. & Block & Repair \\
    \midrule
    No defence & 6/6 & 2/6 & 0 & 0 \\
    Sig. only & 6/6 & 2/6 & 0 & 0 \\
    Deny & 0/6 & 2/6 & 6 & 0 \\
    Strip+retry & 0/6 & 4/6 & 31 & 0 \\
    Authority repair & 0/6 & 4/6 & 4 & 4 \\
    \bottomrule
  \end{tabular}
\end{table}

\paragraph{Authority-repair POC.}
The AgentDojo authority row above exercises the recovery mechanism in a
live benchmark loop. We also keep a deterministic recovery POC
(\texttt{paper/data/authority\_repair\_poc\_v1.csv}) that evaluates
the finer-grained recovery mechanism selected from our five recovery
prototypes: tool-specific authority rules plus action-parameter
provenance. In the banking-shaped \texttt{send\_money} scenarios, an
attacker-controlled TODO supplies either the recipient alone or both
recipient and amount; trusted bill evidence supplies authorized
replacement values. The recovery verdict is
\texttt{repair\_and\_retry}, the attacker values are removed, and the
recovered call matches the user-intended bill payment. When only
external evidence exists, the same policy returns \texttt{deny}. This
artifact does not claim full AgentDojo-native recovery; it isolates the
missing mechanism that would be needed to turn the gate smoke's
prevention result into utility-preserving recovery
(Table~\ref{tab:authority-repair}).

\begin{table}[t]
  \centering
  \scriptsize
  \setlength{\tabcolsep}{3pt}
  \begin{tabular}{@{}p{0.47\linewidth}ccc@{}}
    \toprule
    Scenario & Verdict & Utility & Blocked \\
    \midrule
    trusted bill only & allow & yes & yes \\
    injected recipient & repair & yes & yes \\
    injected amount + recip. & repair & yes & yes \\
    external evidence only & deny & no & yes \\
    \bottomrule
  \end{tabular}
  \caption{Authority-repair POC. Source:
    \texttt{authority\_repair\_poc\_v1.csv}; byte-equal CI in
    \texttt{test\_authority\_repair\_poc.py}.}
  \label{tab:authority-repair}
\end{table}

\subsection{RQ2: Lineage Threshold $\tau$ and Chain Length $K$}
\label{sec:eval-rk}

The propagation rule (\S\ref{sec:design}, equation D14) admits a chain
\emph{laundering} attack only when the running max-of-strong-edges weight
ever drops to or below $\tau$. We sweep $\tau$ against the chain length
$K$ for two attribution algorithms:
\algoCoarse{} (uniform weight $1.0$ on every parent edge) and
\algoLmEval{} under a per-step weight schedule
$w_k = w_0 \cdot d^{k-1}$, where $w_0$ and $d$ are the
\texttt{judge\_init} and \texttt{judge\_decay} columns of
\texttt{paper/data/r\_k\_ablation\_v1.csv}. Figure~\ref{fig:rk-ablation}
renders all 140 cells as five small-multiple heatmaps so the
K-discriminating schedules are visible at a glance; Tables~\ref{tab:rk-coarse}
and~\ref{tab:rk-lmeval} below cite the two endpoints
(\algoCoarse{} and the steepest-decay \algoLmEval{} schedule) for
direct numeric reference.

\begin{figure*}[t]
  \centering
  \includegraphics[width=0.98\linewidth]{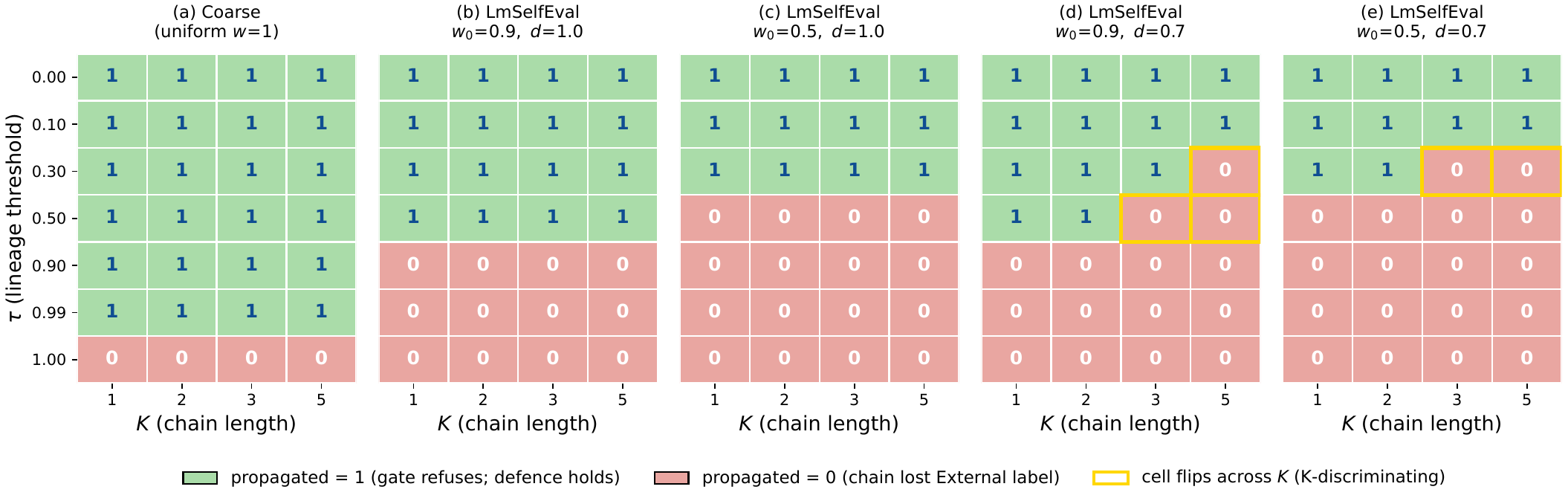}
  \caption{$\tau \times K$ ablation across all five attribution
    configurations (140 cells). Green = $\mathit{propagated} = 1$
    (the External label survives to the chain tip; the gate
    refuses); red = $\mathit{propagated} = 0$ (the chain tip lost
    the External label per the no-strong-parent fallback in
    \S\ref{sec:design-lineage}). Gold-outlined cells are the
    \emph{K-discriminating} cells: rows where increasing $K$ flips
    the outcome. Panels (a)--(c) are flat in $K$; panels (d) and
    (e) are the K-degrading schedules ($d = 0.7$) where the
    interaction the prose discusses becomes observable.
    Source: \texttt{paper/data/r\_k\_ablation\_v1.csv}, byte-equal
    CI-verified.}
  \label{fig:rk-ablation}
\end{figure*}

\paragraph{\algoCoarse{} is binary in $\tau$, flat in $K$.}
With every edge scored at $1.0$, the strict-inequality propagation
rule fires whenever $\tau < 1.0$ and is inert at $\tau = 1.0$ (no
parent satisfies $w > 1$, so the no-strong-parent fallback returns
\trustTrusted{}; see \S\ref{sec:design-lineage}). The useful
operating region for \algoCoarse{} is therefore $\tau < 1.0$;
the $\tau = 1.0$ row in Table~\ref{tab:rk-coarse} is included to
make the inert endpoint visible, not to advertise a safe setting.
Within $\tau < 1.0$, \algoCoarse{} cannot trade chain depth against
trust loss because it has no per-step signal to decay against; the
interesting cells live in \algoLmEval{}.

\begin{table}[t]
  \centering
  \small
  \begin{tabular}{ccccc}
    \toprule
    & \multicolumn{4}{c}{$K$ (chain length)} \\
    \cmidrule(lr){2-5}
    $\tau$ & 1 & 2 & 3 & 5 \\
    \midrule
    0.00 & 1 & 1 & 1 & 1 \\
    0.10 & 1 & 1 & 1 & 1 \\
    0.30 & 1 & 1 & 1 & 1 \\
    0.50 & 1 & 1 & 1 & 1 \\
    0.90 & 1 & 1 & 1 & 1 \\
    0.99 & 1 & 1 & 1 & 1 \\
    1.00 & 0 & 0 & 0 & 0 \\
    \bottomrule
  \end{tabular}
  \caption{Propagation outcome for \algoCoarse{}. Cells contain
    \texttt{propagated} (1 = the \trustExternal{} ancestor's label
    reached the chain tip, so the gate at \S\ref{sec:design-policy}
    will refuse a sensitive action justified by it;
    0 = the chain tip lost the \trustExternal{} label, falling back
    to the no-strong-parent default of \trustTrusted{} described in
    \S\ref{sec:design-lineage}). Source rows: lines tagged
    \texttt{algorithm=A} in
    \texttt{paper/data/r\_k\_ablation\_v1.csv}.}
  \label{tab:rk-coarse}
\end{table}

\paragraph{\algoLmEval{} exposes the $\tau \times K$ trade-off.}
Table~\ref{tab:rk-lmeval} fixes the schedule $w_0 = 0.9, d = 0.7$ -- a
realistic ``judge starts confident, loses confidence per derivation step''
profile -- and sweeps $\tau$ against $K$. The boundary $w_K \le \tau$
becomes visible: $\tau = 0.30$ blocks the chain only when $K \ge 5$
(since $0.9 \cdot 0.7^4 \approx 0.216 \le 0.30$), and $\tau = 0.50$
blocks the chain at $K \ge 3$ (since $0.9 \cdot 0.7^2 = 0.441 \le 0.50$).
The $K$ dimension is genuinely discriminating: under a degrading judge,
two derivation steps that look safe in isolation can land in the unsafe
region once the chain is followed long enough.

\begin{table}[t]
  \centering
  \small
  \begin{tabular}{ccccc}
    \toprule
    & \multicolumn{4}{c}{$K$ (chain length)} \\
    \cmidrule(lr){2-5}
    $\tau$ & 1 & 2 & 3 & 5 \\
    \midrule
    0.00 & 1 & 1 & 1 & 1 \\
    0.10 & 1 & 1 & 1 & 1 \\
    0.30 & 1 & 1 & 1 & \textbf{0} \\
    0.50 & 1 & 1 & \textbf{0} & \textbf{0} \\
    0.90 & 0 & 0 & 0 & 0 \\
    0.99 & 0 & 0 & 0 & 0 \\
    1.00 & 0 & 0 & 0 & 0 \\
    \bottomrule
  \end{tabular}
  \caption{Propagation outcome for \algoLmEval{} with judge schedule
    $w_0 = 0.9, d = 0.7$. Cell semantics match Table~\ref{tab:rk-coarse}:
    1 = \trustExternal{} reached the chain tip; 0 = the strong-edge
    predicate failed somewhere along the chain, falling back to
    \trustTrusted{} per \S\ref{sec:design-lineage}. Bold cells mark
    the rows where increasing $K$ flips the outcome -- the key
    qualitative behaviour a $\tau$-only ablation would hide. The
    full $7 \times 4 \times 4$ sweep across all four judge schedules
    is in \texttt{paper/data/r\_k\_ablation\_v1.csv} (140 cells).}
  \label{tab:rk-lmeval}
\end{table}

\paragraph{Calibration guidance.}
The interaction in Table~\ref{tab:rk-lmeval} suggests $\tau$ should be
chosen jointly with the deployment's expected chain depth, and the
direction is the opposite of the naive intuition: a higher $\tau$
weakens propagation, not strengthens it. For the chain's
\trustExternal{} label to survive to the tip the gate must catch,
every edge weight $w_k$ has to exceed $\tau$. Under the
$w_0 = 0.9$, $d = 0.7$ schedule the tightest constraint is the
deepest step ($w_K = 0.9 \cdot 0.7^{K-1}$): $w_3 = 0.441$ and
$w_5 \approx 0.216$. A deployment that allows up to $K = 5$
derivation hops therefore needs $\tau \le 0.10$ at the sweep
granularity (any $\tau \ge 0.30$ loses the chain at the last
step, as the bold cells in Table~\ref{tab:rk-lmeval} show);
a deployment that allows $K = 3$ needs $\tau \le 0.30$.
Picking $\tau$ from a single $K = 1$ measurement systematically
\emph{over}-estimates the safe upper bound for deeper chains
because the deepest edge under a decaying judge sets the
binding constraint.

\subsection{RQ3: Per-Operation Performance}
\label{sec:eval-perf}

We measure the latency of \method{}'s critical-path primitives on a single
core, no concurrency, $n = 200$ samples per op (where indicated, $n = 100$
for the most expensive Merkle proofs). All measurements come from
\texttt{paper/data/perf\_baseline\_v1.csv}; Table~\ref{tab:perf} reports the
median in microseconds. The full distribution (mean / stdev / min / p95)
is in the CSV.

\begin{table}[t]
  \centering
  \scriptsize
  \setlength{\tabcolsep}{3pt}
  \begin{tabular}{@{}lr@{}}
    \toprule
    Operation & median ($\mu$s) \\
    \midrule
    \multicolumn{2}{l}{\emph{Codec}} \\
    \quad \texttt{encode\_entry}                & 8.0 \\
    \quad \texttt{decode\_entry}                & 4.9 \\
    \quad \texttt{encode\_signed\_view}         & 7.6 \\
    \midrule
    \multicolumn{2}{l}{\emph{Crypto}} \\
    \quad \texttt{ed25519\_sign}                & 27.5 \\
    \quad \texttt{ed25519\_verify}              & 80.8 \\
    \midrule
    \multicolumn{2}{l}{\emph{Merkle log (RFC 6962)}} \\
    \quad \texttt{merkle\_append}               & 3.7 \\
    \quad \texttt{merkle\_prove} ($n{=}16$)     & 16.4 \\
    \quad \texttt{merkle\_prove} ($n{=}256$)    & 298.7 \\
    \quad \texttt{merkle\_verify} ($n{=}64$)    & 3.2 \\
    \midrule
    \multicolumn{2}{l}{\emph{Lineage \& policy}} \\
    \quad \texttt{propagate\_trust} ($n{=}8$ parents) & 1.0 \\
    \quad \texttt{policy\_gate\_call}           & 0.7 \\
    \midrule
    \multicolumn{2}{l}{\emph{Composite (hot path)}} \\
    \quad \texttt{memorystore\_write\_full}     & 213.6 \\
    \quad \texttt{memorystore\_verify\_full}    & 81.9 \\
    \quad \texttt{verifier\_registry\_verify}   & 126.4 \\
    \quad \texttt{verifier\_hook\_filter\_tag} ($n{=}8$) & 717.8 \\
    \bottomrule
  \end{tabular}
  \caption{Per-operation median latency on a single core, $n=200$ samples
    per op (Merkle proofs at $n=100$). Source:
    \texttt{paper/data/perf\_baseline\_v1.csv}; numbers are wall-clock-dependent
    and will vary by hardware, so the CSV is regenerated rather than
    byte-equal CI-verified.}
  \label{tab:perf}
\end{table}

\paragraph{Hot-path budget.}
The end-to-end verify hot path (\texttt{memorystore\_verify\_full},
$82\,\mu$s median) is essentially the cost of a single Ed25519
verify ($81\,\mu$s) plus a small constant: the verifier's signature
check is the dominant term and codec, Merkle inclusion, and lineage
/ policy lookup add only a few microseconds beyond.
The end-to-end write hot path (\texttt{memorystore\_write\_full},
$214\,\mu$s median) is dominated by the same Ed25519 sign step
($27\,\mu$s) plus codec ($8\,\mu$s) and Merkle append
($4\,\mu$s); the remaining $\approx 175\,\mu$s of composite cost
reflects auxiliary work (principal-id and entry-id derivation,
nonce generation, store insertion, and re-encoding of the signed
view) that the runner times end-to-end but does not break out as a
separate bench row.
The retrieval-time filter (\texttt{verifier\_hook\_filter\_and\_tag}
at $n{=}8$, $718\,\mu$s median) is approximately
$8 \times 80\,\mu$s of per-entry verification plus rendering and
trust-label filtering. The standalone
\texttt{verifier\_registry\_verify} bench
($126\,\mu$s) measures a path that performs an extra registry
lookup per call; the end-to-end \texttt{memorystore\_verify\_full}
amortises that lookup, which is why the composite figure is
smaller than the registry-verify microbench. All other primitives
-- including lineage propagation and the policy gate -- are at or
near $1\,\mu$s.

\paragraph{Putting overhead in context.}
A typical Claude or GPT-4-class API turn is in the
$10^2$--$10^3$ millisecond range; \method{}'s entire write
\emph{plus} retrieval-time verification overhead, even at $n{=}8$
retrieved entries, sits at sub-millisecond. The per-step provenance
cost is therefore well below the noise floor of LLM call latency,
and \emph{not} the bottleneck on the agent's critical path.

\paragraph{Where the cost concentrates.}
The asymptotic cost of Merkle inclusion proofs scales with $\log_2 n$
of the log size; the $n{=}256$ row ($298.7\,\mu$s) is more than
$18\times$ the $n{=}16$ row ($16.4\,\mu$s) because each level of the
proof tree adds a SHA-256 evaluation, and proof generation is currently
unbatched. A batched HMAC fast-path (deferred; see
\S\ref{sec:discussion}) is the obvious next optimisation for
deployments where $n$ grows past $10^3$.

\subsection{RQ4: Utility (False-Positive Rate)}
\label{sec:eval-utility}

The §6.2 ASR matrix measures the defence's effectiveness against
attack workflows; this subsection measures its cost on benign
workflows. We want the same gate that refuses a laundered payload to
\emph{not} refuse a legitimate dispatch from a TRUSTED memory entry.

We model two benign workflows that exercise both surfaces of the
lineage layer, plus two workflows that use untrusted external content
for non-sensitive recall:

\begin{itemize}
  \item \texttt{benign\_direct}: the user principal commits a TRUSTED
    entry containing a legitimate \texttt{send\_email} instruction; a
    later turn retrieves the entry and the agent dispatches.
  \item \texttt{benign\_derived}: the user principal commits a TRUSTED
    parent entry (benign prose, no instruction); the next turn
    retrieves it and writes a TRUSTED-derivative containing the
    \texttt{send\_email} instruction. Under \algoCoarse{} with the
    default $\tau = 0$, the derivative inherits \trustTrusted{} via
    Equation~\eqref{eq:d14}.
  \item \texttt{benign\_external\_qa}: an EXTERNAL document is recalled
    to answer a non-sensitive question. The correct behaviour is to
    allow recall because no sensitive sink is requested.
  \item \texttt{benign\_external\_derived\_qa}: an EXTERNAL document is
    summarised into a derived memory and later used for a non-sensitive
    answer. This is the benign analogue of memory laundering: the
    ancestry is untrusted, but the action is not sensitive.
\end{itemize}

Both workflows are deterministic and run through the same harness as
the §6.2 attack matrix; for either workflow, $\mathit{sink\_fired} =
\mathrm{True}$ is the \emph{correct} outcome, and a $\mathrm{False}$
verdict is a false positive.

\begin{table}[t]
  \centering
  \scriptsize
  \setlength{\tabcolsep}{3pt}
  \renewcommand{\arraystretch}{0.95}
  \begin{tabular}{lcccc}
    \toprule
    Defence & Dir. \up & Deriv. \up & Ext-QA \up & Ext-Deriv. \up \\
    \midrule
    \texttt{no\_defense}        & 1.00 & 1.00 & 1.00 & 1.00 \\
    \texttt{sig\_only\_baseline} & 1.00 & 1.00 & 1.00 & 1.00 \\
    \method{}                   & 1.00 & 1.00 & 1.00 & 1.00 \\
    \bottomrule
  \end{tabular}
  \caption{Legitimate-dispatch rate on benign workflows
    (deterministic harness; $\uparrow$ = better). All cells reach
    1.00, i.e.\ zero false positives. Source:
    \texttt{paper/data/utility\_baseline\_v1.csv}; byte-equal CI in
    \texttt{tests/dev/test\_utility\_baseline.py}.}
  \label{tab:utility}
\end{table}

\paragraph{Reading the matrix.}
All twelve cells in Table~\ref{tab:utility} reach 1.00. On these
deterministic benign workflows, \method{}'s lineage layer never
spuriously denies a dispatch; the measured happy-path cost is zero
false positives in this regime. The derived columns matter in
particular: it confirms that derived-trusted chains compose without
triggering the no-strong-parent fallback (\S\ref{sec:design-lineage}),
because the derivative's parent satisfies the strict-inequality
predicate $w = 1.0 > \tau = 0$, and that derived-untrusted chains are
still usable for non-sensitive recall.

\paragraph{Coarse-gating check.}
The repositioned evaluation also includes a first coarse-baseline
artifact, \texttt{paper/data/coarse\_baselines\_v1.csv}, that runs
the three attack workloads and the four benign workflows against
\texttt{memory\_sandbox} and \texttt{coarse\_taint} profiles in
addition to the three rows above. The Memory-Sandbox-style row blocks
all three memory attacks (0.00 attack success) by withholding recall,
but also drops all benign-dispatch columns to 0.00. \method{} reaches
the same 0.00 attack success on the deterministic attack columns while
retaining 1.00 on every benign column. This is not yet a full
statistical comparison against the concurrent Memory Sandbox and
RTW-A systems; Table~\ref{tab:coarse-baselines} is a harness-level
proxy that isolates the operating point \method{} targets:
fine-grained memory use rather than coarse recall removal.

\begin{table}[t]
  \centering
  \scriptsize
  \setlength{\tabcolsep}{2.5pt}
  \begin{tabular}{@{}lccc@{}}
    \toprule
    Profile & Attack blocked & Benign kept & Mixed kept \\
    \midrule
    \texttt{sig\_only} & 1/3 & 4/4 & -- \\
    \method{} & 3/3 & 4/4 & 9/13 \\
    \texttt{memory\_sandbox} & 3/3 & 0/4 & -- \\
    \texttt{coarse\_taint} & 3/3 & 4/4 & 0/13 \\
    \bottomrule
  \end{tabular}
  \caption{Harness-level baseline proxy. Attack and benign columns
    come from \texttt{coarse\_baselines\_v1.csv}; mixed-context
    utility comes from \texttt{coarse\_authority\_matrix\_v1.csv}.
    The rows model mechanism families rather than reimplementing the
    full published Memory Sandbox or RTW-A systems.}
  \label{tab:coarse-baselines}
\end{table}

\paragraph{Where coarse taint loses utility.}
The coarse matrix above is intentionally favorable to coarse taint: its
benign workflows either contain no untrusted sensitive authority or ask
only non-sensitive questions. We therefore add a contrast artifact that
places trusted bill authority and untrusted context in the same
sensitive-action turn. A context-level taint gate must deny every mixed
context in Table~\ref{tab:coarse-authority}. Parameter-level authority
does not: it allows a trusted bill-only transfer despite nearby
untrusted notes, repairs attacker-controlled recipient/amount fields
when trusted bill evidence supplies replacements, and still fails
closed when only external evidence exists. We also aggregate this
distinction across banking transfer, email dispatch, file access, and
API-post workloads in \texttt{coarse\_authority\_matrix\_v1.csv}:
coarse context taint recovers 0/13 benign-intended mixed-context
actions, while parameter-level authority blocks all 13 attacker values
and recovers 9/13 actions by allowing trusted parameters or rewriting
attacker-sourced parameters from trusted evidence.

\begin{figure}[!t]
  \centering
  \includegraphics[width=0.86\linewidth]{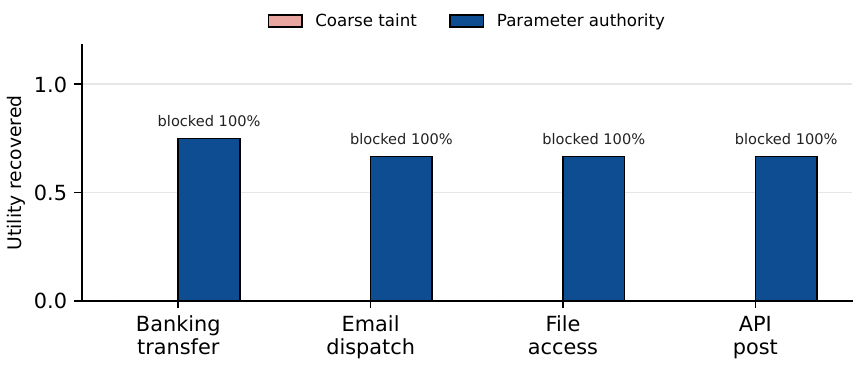}
  \caption{Coarse context taint versus parameter-level authority across
    four sensitive tool workloads. Bar height is utility recovered; the
    annotation reports attacker-value blocking. Parameter-level
    authority preserves mixed-context utility while still blocking all
    attacker values. Source:
    \texttt{coarse\_authority\_matrix\_v1.csv}.}
  \label{fig:coarse-authority-matrix}
\end{figure}

\begin{table}[t]
  \centering
  \scriptsize
  \setlength{\tabcolsep}{2.5pt}
  \begin{tabular}{@{}p{0.45\linewidth}ccc@{}}
    \toprule
    Scenario & Taint & Auth. & Utility \\
    \midrule
    trusted bill + note & deny & allow & yes \\
    injected recipient & deny & repair & yes \\
    injected amount + recip. & deny & repair & yes \\
    external evidence only & deny & deny & no \\
    \bottomrule
  \end{tabular}
  \caption{Coarse context taint vs parameter-level authority. Source:
    \texttt{coarse\_authority\_contrast\_v1.csv}; byte-equal CI in
    \texttt{test\_coarse\_authority\_contrast.py}.}
  \label{tab:coarse-authority}
\end{table}

\subsection{RQ5: Adaptive Laundering Boundaries}
\label{sec:eval-adaptive}

The deterministic ASR matrix is non-adaptive by design. We therefore
evaluate two operating points. The default policy is utility-first:
it trusts parentless derived memories unless an explicit lineage edge
survives above $\tau$. The strict policy is security-first: it denies
parentless sensitive memory actions, treats any positive explicit
untrusted edge as propagating, and lowers $\tau$ for multi-hop
semantic summaries. We add a boundary artifact,
\texttt{paper/data/adaptive\_laundering\_v1.csv}, for three
white-box strategies a reviewer would naturally try.

\begin{itemize}
  \item \emph{Parent omission}: the agent writes a derivative without
    declaring the untrusted parent and without a preceding retrieved
    context for auto-attribution. This fail-opens locally
    (\texttt{sink\_fired}=1), matching the limitation documented in
    \S\ref{sec:discussion}.
  \item \emph{Edge dilution}: the untrusted parent is present but its
    attribution weight is pushed below $\tau$ while trusted cover
    parents remain strong. At $\tau=0.3$, external weight 0.20
    fail-opens; external weight 0.50 propagates and blocks.
  \item \emph{Semantic laundering}: an \algoLmEval{} schedule with
    $w_0=0.9,d=0.7$ succeeds at $K=1$ but fail-opens at $K=5$ because
    the deepest edge weight drops below $\tau=0.3$.
\end{itemize}

This table is intentionally not presented as a win-rate leaderboard.
Its purpose is to make the boundary explicit: \method{}'s soundness is
conditional on parent mediation and on the attribution recall $r_K$
remaining above threshold.

\begin{table}[t]
  \centering
  \scriptsize
  \setlength{\tabcolsep}{2.5pt}
  \begin{tabular}{@{}lcccc@{}}
    \toprule
    Scenario & $\tau$ & $K$ & Ext. $w$ & Fired \down \\
    \midrule
    parent omission & 0.00 & 1 & 0.00 & 1 \\
    edge dilution & 0.30 & 1 & 0.20 & 1 \\
    edge dilution & 0.30 & 1 & 0.50 & \textbf{0} \\
    semantic laundering & 0.30 & 1 & 0.90 & \textbf{0} \\
    semantic laundering & 0.30 & 5 & 0.22 & 1 \\
    \bottomrule
  \end{tabular}
  \caption{Adaptive laundering boundary cells. Each row is one
    deterministic mechanism check, not a statistical trial. Source:
    \texttt{adaptive\_laundering\_v1.csv}.}
  \label{tab:adaptive-boundaries}
\end{table}

We therefore add a companion mitigation artifact,
\texttt{paper/data/adaptive\_mitigations\_v1.csv}. Table~\ref{tab:adaptive-mitigations}
shows the default-vs-strict result for the corresponding fail-open
cells.

\begin{table}[t]
  \centering
  \scriptsize
  \resizebox{\linewidth}{!}{%
  \begin{tabular}{llccc}
    \toprule
    Scenario & Strict mitigation & Default fired & Strict prop. & Strict fired \\
    \midrule
    parent omission &
      parentless sensitive deny & 1 & 0 & \textbf{0} \\
    edge dilution &
      zero-threshold explicit edges & 1 & 1 & \textbf{0} \\
    semantic laundering &
      lower $\tau$ for multi-hop summaries & 1 & 1 & \textbf{0} \\
    \bottomrule
  \end{tabular}
  }
  \caption{Default-vs-strict operating points for adaptive laundering
    boundaries. Default fired values are the corresponding fail-open
    cells in \texttt{adaptive\_laundering\_v1.csv}; strict columns
    come from \texttt{adaptive\_mitigations\_v1.csv}. Byte-equal CI:
    \texttt{tests/dev/test\_adaptive\_laundering.py}.}
  \label{tab:adaptive-mitigations}
\end{table}

These mitigations are deliberately deployment knobs, not a claim that
attribution can recover a parent it never observed. They turn the
paper's most obvious white-box objections into explicit security/utility
trade-offs for operators to select.

The companion strict-utility artifact
(\texttt{paper/data/strict\_utility\_v1.csv}) is an analytic policy-cost
table over the four benign workflows from \S\ref{sec:eval-utility}.
Parentless sensitive denial blocks \texttt{benign\_direct} because that
workflow is, by definition, a parentless sensitive memory action. The
zero-threshold explicit-edge and lower-$\tau$ semantic-summary knobs
retain all four benign workflows. This is the security/utility choice:
fail-closed parentless semantics buy robustness against omission at the
cost of requiring direct sensitive reminders to carry an explicit
trusted parent or fresh user confirmation.

\begin{table}[t]
  \centering
  \scriptsize
  \setlength{\tabcolsep}{3pt}
  \begin{tabular}{@{}lcccc@{}}
    \toprule
    Strict knob & Dir. & Deriv. & Ext-QA & Ext-Deriv. \\
    \midrule
    parentless deny & 0 & 1 & 1 & 1 \\
    explicit-edge $\tau{=}0$ & 1 & 1 & 1 & 1 \\
    lower multi-hop $\tau$ & 1 & 1 & 1 & 1 \\
    \bottomrule
  \end{tabular}
  \caption{Analytic strict-mode utility cost on the four benign
    workflows from Table~\ref{tab:utility}. Source:
    \texttt{strict\_utility\_v1.csv}.}
  \label{tab:strict-utility}
\end{table}

\paragraph{Scope of this measurement.}
Table~\ref{tab:utility} covers the deterministic regime only; the
\algoLmEval{} judge is exercised by a scripted weight schedule, so a
real LLM judge that hallucinates EXTERNAL parents on benign content
could in principle introduce false positives that this table does not
catch. \S\ref{sec:discussion} discusses the
no-strong-parent fallback and the corresponding adversarial path; we
leave a real-LLM utility sweep on the \algoLmEval{} judge to
follow-up in \S\ref{sec:discussion}.

\FloatBarrier

\section{Related Work}
\label{sec:related}

We organise prior work by the mechanism a defence relies on, not by
the order of publication, and then position \method{} against the
union. The systematisation of IPI-centric agent
defences~\cite{ji2025ipisok} catalogues twenty-three frameworks
across five technical paradigms; the discussion below covers the
paradigms that touch persistent memory and the published attacks
that target it.

\subsection{Memory-Poisoning Attacks}
\label{sec:related-attacks}

The persistent-memory threat model is now represented by several
published and concurrent lines. Three older threat models inform the
baseline columns in \S\ref{sec:attacks}.
AgentPoison~\cite{chen2024agentpoison} plants trigger-bound
backdoors against an agent's vector store; the trigger is a token
sequence that, once present in a query, surfaces the attacker's
entry through retrieval. MINJA~\cite{dong2025minja} reformulates
the attack as a query-only interaction, removing the requirement
that the attacker have direct write access to the store.
MemoryGraft~\cite{srivastava2025memorygraft} extends the surface to
poisoned \emph{experience} entries the agent itself commits over
time, and notes -- but does not implement -- a sketch defence in
which the agent signs each self-derived entry. \method{} treats
MemoryGraft's sketch as the canonical signature-only baseline
(\texttt{sig\_only\_baseline} in \S\ref{sec:eval-asr}).

Zombie Agents~\cite{yang2026zombieagents} is the closest attack-side
predecessor to our memory-laundering workload: it formalises a
two-phase infection/trigger pattern in which attacker-controlled web
content is written into long-term memory through normal update logic
and later drives unauthorised actions. A concurrent mechanistic
evaluation~\cite{leong2026defenseeffectiveness} studies delayed
trigger attacks across architectural defence layers. These works
establish that persistent cross-session memory compromise is a real
attack class. \method{} does not rely on claiming otherwise; it asks
what memory-layer enforcement primitive preserves useful recall while
blocking sensitive actions whose justification descends from
untrusted memory ancestry.

Indirect prompt injection at the active-context
boundary~\cite{greshake2023indirect} is the foundational threat
model these memory-side attacks descend from; we treat it as
out-of-scope per \S\ref{sec:threat-oos}.

\subsection{Coarse Memory Gating and Temporal Re-Entry}
\label{sec:related-coarse}

Two concurrent preprints motivate the strongest coarse mechanism
profiles we approximate in the harness. The defence-effectiveness
study~\cite{leong2026defenseeffectiveness} reports that input and
retrieval filters largely fail on delayed-trigger attacks, while a
Memory Sandbox that removes explicit recall blocks most evaluated
models. Autonomous LLM Agent Worms~\cite{zha2026agentworms} studies
file-backed multi-agent propagation and proposes RTW-A: temporal
write-before-exposed-read control, sealed configuration, typed memory
promotion, persistent taint, and capability attenuation.

These mechanisms are important but intentionally coarse. Memory
Sandbox disrupts the recall capability the attack needs; RTW-A treats
exposed reads of tainted carriers as contaminating authority-bearing
decision state and attenuates high-risk capabilities. \method{} takes
the complementary entry-level route: it keeps recall available, signs
the memory entry and trust label, records the derivation parents, and
lets the sensitive-action gate make a per-dispatch decision from the
retrieved entries' ancestry. Section~\ref{sec:eval-utility}
quantifies this distinction with harness-level coarse profiles:
security parity with coarse gating in the modelled attacks, but less
benign recall loss and fewer mixed-context sensitive blocks.

\subsection{IFC at Planning Time}
\label{sec:related-ifc}

Fides~\cite{costa2025fides} is the strongest deterministic baseline
in the published landscape: a planner that tracks confidentiality
and integrity labels through the agent's tool-call sequence and
refuses dispatches whose labels violate a policy. Its formal
guarantees are real and its evaluation on AgentDojo is convincing.
The assumption that lets it terminate cleanly is that label state
is \emph{per execution}: at the boundary of a planner run, labels
are erased. \method{} retains the IFC discipline at the policy gate
(M6) but persists the label across an LLM-mediated derivation by
attaching it to the memory entry itself, so an
\trustExternal{} ancestor's label survives every session boundary
the agent crosses. The two systems are largely complementary; a
deployment that wants both deterministic planning labels and
durable provenance can run Fides above \method{}.

\subsection{Cross-Session Taint Tracking}
\label{sec:related-taint}

The closest defence in spirit is
NeuroTaint~\cite{cai2026neurotaint} (the system introduced in
``Ghost in the Agent: Redefining Information Flow Tracking for LLM
Agents''), which propagates semantic taint across LLM agent
sessions and benchmarks against a TaintBench evaluation. NeuroTaint
shows that semantic taint is a useful primitive for cross-session
memory; it does not, however, supply cryptographic integrity for
the taint label. An adversary who can write to the memory store
without being caught by the host can therefore re-label entries.
\method{} buys integrity for free: the trust label is part of the
CBOR-canonical record covered by the writer's Ed25519 signature
(M2), so a re-labelling attempt produces an entry whose signature
no longer verifies under any registered principal. Beyond the
classic distinction, Neutaint~\cite{she2020neutaint} (the
non-LLM-agent neural taint analyser whose name unfortunately rhymes)
is the closest cross-domain analogue and is cited only as a
historical primitive, not a baseline.

\subsection{Experience-Driven Memory Defences}
\label{sec:related-experience}

A-MemGuard~\cite{wei2025amemguard} is the closest experience-driven
memory defence. It targets context-dependent malicious records and
self-reinforcing error cycles by combining consensus-based validation
over related memories with a dual-memory store of distilled lessons.
This shifts defence away from static filtering and toward memory that
can self-correct over time. The mechanism is orthogonal to
\method{}: A-MemGuard does not cryptographically bind per-entry trust
labels, maintain a Merkle-auditable chain of custody, or attach
weighted derivation parents to agent-authored memories. A deployment
could use A-MemGuard's lesson store as an additional validator above
\method{}'s signed memory layer; the open question is how much utility
and precision the extra validation adds once entry-level lineage is
already enforced.

\subsection{Retrieval-Stage Filtering}
\label{sec:related-retrieval}

The RAGPart and RAGMask defences~\cite{pathmanathan2025ragdef}
filter or mask candidate entries at retrieval time, treating the
recall surface as the chokepoint. The assumption is that the
benign-versus-poisoned distinction is observable from the entry
content. \attackSleeper{} violates this: a laundered entry's
content is, by construction, an authentic LLM summary of input
context, indistinguishable from a benign summary of the same
context without provenance information. Retrieval-stage filtering
is therefore necessary for some attack classes
(corpus-poisoning attacks where the entry is overtly malicious)
and insufficient for others (laundered derived entries).
\method{}'s verifier-aware retrieval (M5) strictly adds a
trust-label rendering and downstream gating step to this retrieval
model: it performs the filter \emph{and} surfaces the trust label so
the sensitive-action gate can refuse downstream regardless of recall.

\subsection{Signature-Only Memory Integrity}
\label{sec:related-sig}

A natural impulse from the supply-chain integrity literature is to
re-use Ed25519 signatures and an inclusion log directly on memory
entries: per-entry signing plus a hash-chain or Merkle audit log
attests \emph{who} authored each entry. We model this approach
directly as the deterministic \texttt{sig\_only\_baseline} cell of
our harness (signature + Merkle inclusion, no lineage propagation).
Two orthogonal published primitives are commonly invoked here:
in-toto~\cite{torres-arias2019intoto} provides farm-to-table
provenance for software supply chains, and the SLSA framework
(documented as deployed practice
in~\cite{tamanna2024slsa}) layers attestation expectations over a
signed-build pipeline. Both are about \emph{static} artifacts and
do not address a derivation step performed by the runtime LLM.
RFC~6962~\cite{laurie2013rfc6962} provides the inclusion-log
primitive (which we re-use in M3); CBOR
canonicalisation~\cite{bormann2020rfc8949} provides the
serialisation discipline that lets the signature actually cover an
unambiguous byte string.

The capability gap in any signature-only design is precisely the
one Theorem~\ref{thm:soundness} closes: the signature attests to
\emph{who} signed, not \emph{from what} the entry was derived.
Section~\ref{sec:eval-asr} reports the empirical consequence on
\attackSleeper{}.

\subsection{Sensitive-Action Policy}
\label{sec:related-policy}

Progent~\cite{shi2025progent} formalises an agent's
sensitive-action policy as a JSON-schema predicate set and is
designed to be dropped in to existing agent stacks. \method{}'s
M6 gate adopts the Progent predicate model directly; the
contribution here is not a new policy formalism but the
\emph{trust-aware} dispatch decision: the gate consults M5's
rendered trust labels, not the LLM's own description of which
entries it consulted.

\subsection{Position of \method{}}
\label{sec:related-position}

The capability that distinguishes \method{} is the joint coverage
of four dimensions that the prior work above splits across systems:
(i) cryptographic integrity for the trust label itself,
(ii) lineage attribution across LLM-mediated derivation,
(iii) cross-session persistence of the label, and
(iv) coverage of agent-derived entries (not only externally
ingested ones). Fides has policy enforcement at the planner level but
not durable memory-resident labels; NeuroTaint has semantic
cross-session taint but not cryptographic label integrity;
signature-only baselines (the \texttt{sig\_only\_baseline} cell in
our harness, and any deployment that signs each memory entry without
attaching a derivation graph) have (i) and partial
(iii) but not (ii); RAG-stage filters have neither (ii) nor (iv);
Memory Sandbox and RTW-A protect persistent state through coarse
recall removal, temporal re-entry, typed promotion, or capability
attenuation rather than cryptographically bound entry-level
derivation; A-MemGuard validates and learns from memory failures but
does not provide a signed lineage DAG. \method{}'s specific operating
point is the intersection of all four entry-level capabilities.
Figure~\ref{fig:capability-matrix}
plots the four-dimensional coverage across the closest systems. Rows
for published peers are literature-coded capability summaries, not
local reproductions in our harness.

\begin{figure}[!t]
  \centering
  \includegraphics[width=\columnwidth]{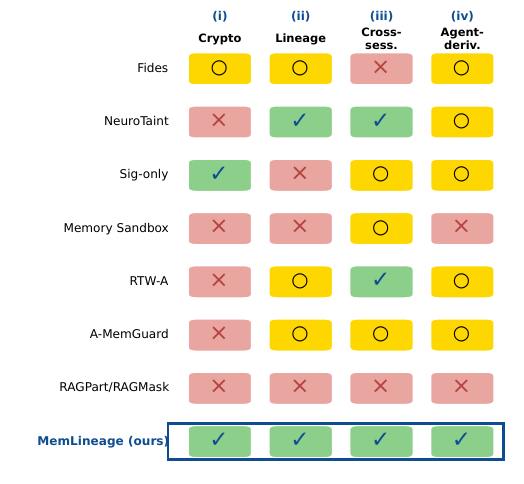}
  \caption{Four-dimensional capability matrix. \emph{Crypto}
    integrity for the trust label, \emph{lineage} attribution
    across LLM-mediated derivation, \emph{cross-session}
    persistence of the label, and coverage of
    \emph{agent-derived} entries.
    $\checkmark$ = full coverage; $\bigcirc$ = partial; $\times$ =
    absent. Among the published and concurrent systems compared in
    this figure, \method{}'s highlighted row is the only one that
    covers all four capabilities in one system. Peer rows are coded
    from their published mechanisms rather than local reproductions.}
  \label{fig:capability-matrix}
\end{figure}
\FloatBarrier

\subsection{Adjacent and Non-Competing Threads}
\label{sec:related-adjacent}

Several threads sit alongside \method{} without competing for the
same evaluation cells. Detection-only defences (perplexity filters,
classifier-based prompt-injection guards) are an orthogonal layer
that \method{} can sit beneath; the IPI-SoK
landscape~\cite{ji2025ipisok} catalogues thirteen such systems
across the detection / sanitisation paradigms and they are
documented there. Confidentiality-oriented memory defences
(encryption-at-rest, TEE-bound key custody) protect the contents of
$M$ rather than its trust labels and can be composed with M2's key
custody. Active-learning attacks against the retriever's ranking
model attack the surface \emph{before} $R(q)$ is fixed and are out
of scope per \S\ref{sec:threat-oos}.

\section{Discussion}
\label{sec:discussion}

\paragraph{LLM-inference trust assumption.}
Throughout this work we assume the LLM weights and inference path
are themselves trusted (\S\ref{sec:threat-oos}). An adversary who
can fine-tune a backdoor into the inference model, or who can route
inference through a compromised endpoint, can produce derived
entries whose semantic content has no legitimate provenance even
though the lineage layer reports a clean strong-edge chain back to
trusted inputs. \method{} cannot detect this case from the memory
side; it is a separate threat class for which weight-level
attestation (e.g., LLM-supply-chain provenance for the model
itself) is the appropriate primitive. We treat the two threats as
composable: a deployment that wants both runs an attested-inference
stack underneath \method{}.

\paragraph{White-box attention requires open weights.}
\algoAttn{} (\S\ref{sec:design-lineage}) is the strongest of the
three attribution algorithms we evaluate, but it requires that the
inference engine expose per-layer attention readouts. Closed-weight
APIs do not, and most production deployments rely on closed-weight
APIs. The deployment story is therefore: \algoCoarse{} provides the
upper bound on recall at zero cost; \algoLmEval{} provides the
intermediate trade-off using the same closed-weight API the agent
already pays for; \algoAttn{} is reserved for deployments that ship
their own open-weights inference (e.g., a Qwen-2.5-7B INT4
deployment). Section~\ref{sec:eval-rk} reports
\algoCoarse{} and \algoLmEval{} numbers; the \algoAttn{} ablation
on the open-weights model is left to a follow-up.

\paragraph{Per-entry Ed25519 dominates the write hot path.}
Table~\ref{tab:perf} shows that the end-to-end write hot path is
$214\,\mu$s, dominated by Ed25519 sign ($27\,\mu$s) and the
auxiliary work the runner times end-to-end. For agents that write
at deeply sub-millisecond cadence (e.g., dense tool-call loops),
the obvious optimization is the batched HMAC fast-path noted in
\S\ref{sec:design-crypto}: every $N$ writes are signed once with
an Ed25519 signature over the HMAC roots, amortising the
public-key cost. We deliberately keep the deterministic
evaluation on per-entry Ed25519 because it gives a clean
worst-case number; the fast-path is documented as a deferred
implementation task and not measured here.

\paragraph{Single-host deployment.}
The Merkle log root in M3 is anchored at the host that runs the
agent. A deployment with multi-host replicated memory needs a
consensus protocol on the root; we sketch the protocol in the
implementation section but do not evaluate it. Cross-host
replication is closer to the operational concerns of the
supply-chain integrity literature
(in-toto~\cite{torres-arias2019intoto}, SLSA~\cite{tamanna2024slsa})
than to the threat model fixed in \S\ref{sec:threat}, and we treat
it as out of scope here.

\paragraph{Soundness theorem antecedent.}
Theorem~\ref{thm:soundness} is conditional: it requires that every
edge along the critical path satisfy $w > \tau$. The empirical
content is the recall $r_K$ that the chosen attribution algorithm
achieves at chain length $K$, which we measure
(\S\ref{sec:eval-rk}) but do not prove a lower bound on. A useful
follow-up would establish $r_K$ lower bounds for an attribution
algorithm under stated assumptions on the LLM's paraphrase
distribution; we leave this as theoretical future work.

\paragraph{No-strong-parent fallback.}
When an attribution algorithm reports no parent above $\tau$, the
propagation rule from \S\ref{sec:design-lineage} returns
\trustTrusted{} -- the safest label semantically, but also the
\emph{most permissive} for the gate at \S\ref{sec:design-policy}. An
adversary that can suppress every attribution edge below $\tau$
(for example, by prompt-injecting an \algoLmEval{} judge into
returning low scores for every parent) defeats lineage propagation
locally and the gate dispatches the laundered entry. The
JSON-envelope hardening of \S\ref{sec:impl} narrows the
prompt-injection surface but does not eliminate the residual risk.
Two mitigations a deployment can layer: (i) flip the no-strong-parent
default to \trustDerivedUntrusted{}, trading false positives for
fail-closed semantics; (ii) require quorum across multiple
attribution algorithms (\algoCoarse{} as a coverage fallback for
\algoLmEval{}'s precision). Section~\ref{sec:eval-adaptive}
evaluates mechanism-level strict-mode variants for the corresponding
fail-open cells; the remaining open problem is a full adaptive
LLM-generated payload sweep at production scale.

\paragraph{Adaptive adversary coverage.}
The deterministic harness used in \S\ref{sec:evaluation} fixes a
non-adaptive payload sequence per cell so that the ASR matrix is
reproducible without API calls. Capability~5 of the threat model
(adaptive querying, \S\ref{sec:threat-attacker}) is therefore
\emph{not exercised} in the headline ASR numbers. We do, however,
exercise three mechanism-level adaptive laundering boundaries in
\S\ref{sec:eval-adaptive}: parent omission, edge dilution, and
semantic attribution decay, together with strict-mode mitigations
that close those cells. The
LLM-backed runner shipped with the artifact
(\texttt{scripts/run\_llm\_matrix.py}) instantiates the same defence
cells against either a hosted Anthropic model or GitHub Copilot's
\texttt{gpt-5.3-codex}; once adaptive payload generation is added
on top of that runner, it becomes the natural surface for an
adaptive evaluation (the current runner ships only with the fixed
non-adaptive payload sequence).
We treat the full adaptive ASR sweep as future work because (i) its
measurement noise floor differs from the deterministic harness's
zero-noise setting and (ii) running the sweep at meaningful sample
size requires a controlled API budget.

\paragraph{LLM-backed validation: the model is part of the defence.}
The shipped \texttt{paper/data/asr\_matrix\_llm\_v1.csv} (a single
\texttt{gpt-5.3-codex} run via the GitHub Copilot CLI) reaches
0.00 ASR on every cell, including \texttt{no\_defense}. The reason
is not that \method{} is redundant; it is that the runner sends a
benign user query (\texttt{"any query"}) per trial, and a
safety-trained modern model declines to act on instructions hidden
in retrieved memory when the user is not asking for that action.
This is a complementary measurement: when the inference model is
itself an effective IPI defence, the headline ASR collapses to
zero across the matrix and the defence-vs-defence delta is no
longer observable. A weaker model (an older Claude Haiku, an
unaligned open-weights baseline, or an actively jailbroken setup)
would ablate the defence layer alone within this same harness, and
we leave that comparison to follow-up work. Independently, the
end-to-end cross-benchmark runs in \S\ref{sec:eval-setup} provide
direct evidence that even \texttt{gpt-5.3-codex} is exploitable in
the right framing. With AgentDojo's task-relevant
\texttt{DirectAttack} injection, the same model that produced the
all-zero §6.2 LLM cells dispatches the attacker's tool call. With the
explicit \texttt{vulnerable\_tool\_content} profile, a tool-call-shaped
injection drives both no-defence baselines to 6/6 strict AgentDojo ASR
while all \method{} rows remain at 0/6. The all-zero LLM table is
therefore a runner-level artifact of the benign user query
(\texttt{"any query"}), not evidence that the inference model is
invariably IPI-safe. The deterministic harness in
\S\ref{sec:evaluation} remains the primary §6.2 evidence because it
isolates the defence layer from the inference model's own safety
behaviour; the AgentDojo runs show both sides of the interaction:
modern instruction hierarchy can suppress weak payloads, but a
permissive tool-output agent is fully exploitable without provenance
gating.

\paragraph{Operational use of \method{} as a building block.}
\method{}'s six modules compose with several of the orthogonal
defences catalogued in~\cite{ji2025ipisok}. A deployment that
wants belt-and-braces protection can run a detection-only filter
(perplexity, classifier-based) above retrieval and a
confidentiality-oriented memory layer (TEE-bound key custody, or
encryption-at-rest) below M2. The composability is what makes
\method{} a useful primitive rather than a single point solution:
it does not displace existing IPI defences; it supplies the
trust-label persistence the catalogued defences uniformly assume
and uniformly do not provide.

\paragraph{Prevention is not recovery.}
The AgentDojo sweeps in \S\ref{sec:eval-asr} show a deliberate
limitation of pure DENY. Blocking the attacker's tool call keeps the
security predicate false, but it does not by itself repair the task
trajectory: in both the default DirectAttack sweep and the vulnerable
tool-output sweep, deny-only rows keep ASR at 0/6 while recovering
utility on only 2/6 pairs. STRIP\_AND\_RETRY improves utility to 4/6
in the vulnerable sweep, but its behaviour on \texttt{user\_task\_10}
is costly because the model repeatedly re-enters the same payment
workflow. Authority repair is the cleaner recovery path: when trusted
parameter authority is available, M6 rewrites only attacker-sourced
arguments and keeps the attack goal false with much lower blocking
pressure. The remaining limitation is integration, not the repair rule
itself: a production AgentDojo-style stack must expose trusted source
labels for action parameters rather than receiving them from
benchmark-side hints.

\section{Conclusion}
\label{sec:conclusion}

We presented \method{}, a defence for LLM agent memory built around
cryptographic provenance and LLM-mediated derivation lineage.
Persistent-memory attacks are now clearly established by prior and
concurrent work; \method{} addresses the enforcement question those
attacks expose: how can an agent keep using long-term memory while
refusing sensitive actions whose active justification descends from
untrusted state? The empirical result of the deterministic harness is
that \method{} is the only configuration we tested that drives Attack
Success Rate to zero on all three modelled workloads, while leaving
sub-millisecond per-operation overhead on the agent's critical path.
The capability that distinguishes \method{} from the published
landscape is the joint coverage of four entry-level dimensions --
cryptographic integrity for the trust label, lineage attribution
across LLM-mediated derivation, cross-session persistence of the
label, and coverage of agent-derived entries. We release the source
code, the deterministic evaluation harness, and reproducible
artifacts (byte-equal CI-verified where deterministic, regenerated
where wall-clock-dependent)
so that the result can be checked end-to-end without API access.
Future work covers a full statistical Memory-Sandbox/RTW-style
comparison, adaptive LLM-generated payload sweeps, the batched-HMAC
fast-path, the white-box attention attribution algorithm on
open-weights inference, the multi-host Merkle anchoring protocol, and
a theoretical $r_K$ lower bound for a stated paraphrase model.

\bibliographystyle{plain}
\bibliography{refs}

\end{document}